\newtheorem{theo}{Theorem}
\newtheorem{definition}{Definition}
\newtheorem{lemma}{Lemma}
\newtheorem{cor}{Corollary}
\newtheorem{prop}{Proposition}
\theoremstyle{definition}
\newtheorem{remark}{Remark}
\theoremstyle{plane}
\def \beq{ \begin{equation} }
\def \eeq{\end{equation}}
\def \ind{\mbox{ind}}
\begin{document}
\title{Central Configurations and Total Collisions for Quasihomogeneous n-Body Problems}

\renewcommand\baselinestretch{}
\authors{Florin Diacu$^{*}$, Ernesto P\'erez-Chavela$^{**}$ and Manuele
Santoprete$^{***}$}

\address{\noindent $^{*}$ Department of Mathematics and Statistics,
University of Victoria, Victoria Canada}
\address{\noindent $^{**}$ Departamento de Matem\'aticas, UAM--Iztapalapa, A.P. 55--534,
09340 Iztapalapa, Mexico, D.F., Mexico}
\address{\noindent $^{***}$ Department of Mathematics,
University of California, Irvine, USA}

\email{diacu@math.uvic.ca, epc@xanum.uam.mx, msantopr@math.uci.edu}

\subjclass{70F10, 70H05}

\keywords{central configuration, relative equilibria, total
collision manifold}




\begin{abstract}
We consider $n$-body problems given by potentials
of the form ${\alpha\over r^a}+{\beta\over r^b}$ with 
$a,b,\alpha,\beta$ constants, $0\le a<b$. To analyze the dynamics
of the problem, we first prove some properties related
to central configurations, including a generalization of
Moulton's theorem. Then we obtain several qualitative 
properties for collision and near-collision orbits in the Manev-type case
$a=1$. At the end we point out some new relationships between central configurations, relative equilibria, and homothetic 
solutions. 
\date{\large{\today}}
\end{abstract}

\maketitle

\markboth{F. Diacu, E. P\'erez-Chavela and M. Santoprete}{The Quasihomogeneous $n$-Body Problem}


\section{Introduction}

The $n$-body problem studied here is given by a potential of the form ${\alpha\over r^a}+{\beta\over r^b}$, where $r$ is the distance between bodies and $a,b,\alpha,\beta$ are constants, $0\le a<b$ (see \cite{diacu96, ernesto98}). In the first part of the paper we treat the general problem, and in the second part we focus on the case $a=1$. The function ${\alpha\over r^a}+{\beta\over r^b}$, called quasihomogeneous because of being the sum of homogenous functions of different degrees, generalizes classical potentials, such as those of Newton, Coulomb, Birkhoff, Manev, Van der Waals, Libhoff, Schwarzschild, and Lennard-Jones. Thus, the applicability of the quasihomogeneous $n$-body problem ranges from celestial mechanics and atomic physics to chemistry and crystallography.

Although many properties of the Newtonian $n$-body problem have a correspondent in the homogeneous case, this is not true for nonhomogeneous potentials. On one hand, the transposition of known results is far from trivial; on the other, new properties show up.

An intriguing aspect we will point out in this paper refers to central configurations, which are crucial for understanding the dynamics of the $n$-body problem (see \cite{saari80}). The central configurations of the quasihomogeneous potential are in a certain relationship with the central configurations of the homogeneous functions that form this potential. Thus, we will introduce here the notion of simultaneous central configuration and will investigate its connection with the classical concept.

In Section 2, we define the quasihomogeneous $n$-body problem and write down the equations of motion. In Section 3, we introduce the concepts of central and simultaneous central configuration, the latter being specific to quasihomogeneous potentials. Section 4 deals with collinear central configurations. Using critical point theory, we prove a generalization of Moulton's theorem by showing that the number of collinear central configurations of $n$ bodies is $n!/2$. Starting with Section 5, we restrict our study to Manev-type problems, \cite{diacu00}, i.e.\ those given by potentials of the form ${\alpha\over r}+{\beta\over r^b}$, and show that there are exactly two planar central configurations in the 3-body case. Section 6 introduces a framework for the study of collision and near-collision orbits, which is performed in Sections 7 and 8. We study in detail the network of collision solutions and determine the relationship between central configurations, on one hand, and relative equilibria and homothetic orbits, on the other hand. 
It is important to note that if in the homogeneous case the correspondence between central configurations and homothetic solutions is one-to-one, this fails to be the case in the quasihomogeneous problem. The relationship between central configurations and relative equilibria remains unchanged, i.e.\ one-to-one, in the quasihomogeneous case. For Manev-type potentials, homothetic orbits are less likely than in the Newtonian case, in the sense that they show up only for simultaneous central configurations.

\section{The Quasihomogeneous $n$-Body Problem}

We will start with defining the planar quasihomogeneous $n$-body problem.  Consider the linear space 
\beq \Omega=\{{\bf r}=({\bf
r}_1,\ldots,{\bf r}_n)\in({\mathbb R}^2)^n|\sum_{i=1}^n m_i{\bf
r}_i=0\}, \eeq
where $m_i>0, i=1,2,\dots,n$, are the masses of the $n$ bodies and
${\bf r}_i, i=1,2,\dots,n$, represent their coordinates. Notice that $\sum_{i=1}^n m_i{\bf r}_i=0$ fixes the centre of mass
at the origin of the coordinate system.
Let \beq \Delta_{ij}=\{({\bf r}_1,\ldots,{\bf
r}_n)\in \Omega|{\bf r}_i={\bf r}_{j}\}; \quad
\Delta=\bigcup_{i,j} \Delta_{ij}. \eeq We call $\Delta$ the collision set. The potential
$\,U\,$ of the system is a function defined on the configuration
space $\tilde \Omega=\Omega\setminus \Delta$ and  is given by $$ U
= W + V,$$ where $W$ is a homogeneous function of degree $-a$,
$a\ge 0$,
\begin{equation}\label{homa}
W({\bf r}_1,\ldots,{\bf r}_n)\,=\,\sum_{i<j}\frac{m_im_j}{\|{\bf r}_i-{\bf r}_j\|^a},
\end{equation}
and $V$ is a homogeneous function of degree
$-b$, $b>a$,
\begin{equation}\label{homb}
V({\bf r}_1,\ldots,{\bf r}_n)\,=\,\sum_{i<j}\frac{m_im_j}{\|{\bf r}_i-{\bf r}_j\|^b}.
\end{equation}
The equations of motion of the $n$ bodies define a vector field $X$  on the tangent bundle $T(\tilde\Omega)$.
The configuration space of the system is $\tilde\Omega$ and the cotangent bundle is $T^*(\tilde\Omega)$.
Let $\;{\bf p}\,=\,M^{-1}\dot{\bf r}\;$
be the linear momentum of the system of particles, where $\,M\,$
is the diagonal matrix
$\,M\,=\,\mbox{diag}\,(m_1,m_1,m_2,m_2,\dots,m_n,m_n).\;$ Then the
equations of motion  can be written as a Hamiltonian system,
\begin{equation}\label{hamsys}
\begin{split}
   \dot{\bf r} &=\;\displaystyle\frac{\partial H}{\partial{\bf p}}\\
   \dot{\bf p} &= -\,\displaystyle\frac{\partial H}{\partial{\bf r}},
\end{split}
\end{equation}
where $\,H:T^*(\tilde\Omega)\to{\rm I\!R}\,$ is the
Ham\-il\-to\-ni\-an function given by
\begin{equation}\label{hamf}
   H({\bf r}, {\bf p})\,=\,\frac{1}{2}\,{\bf p}^t\,M^{-1}{\bf p}\,-\,U({\bf r}).
\end{equation}
\noindent
Here $T=\frac{1}{2}\,{\bf p}^t\,M^{-1}{\bf p}\;$ is the kinetic
energy. The total energy $\,H\,$ is a first integral for the
system (\ref{hamsys}); this means that $T-U=h$ (constant)  along any orbit. Other integrals are given by the linear momentum,
$\sum_{i=1}^nm_i{\dot{\bf r}}_i$, and by the angular momentum,
$J:T\rightarrow {\bf R}$, defined as \beq J({\bf r},{\bf
v})=\sum_{i=1}^n m_i{\bf r}_i\times {\bf v}_i. \eeq
Notice that the relationships for the centre of mass, 
$\sum_{i=1}^nm_i{\bf r_i}=0$, and linear momentum,
$\sum_{i=1}^nm_i{\dot{\bf r}}_i=0$, together with the
energy integral, $T-U=h$, reduce the dimension of the Hamiltonian
system (\ref{hamsys}) from $4n$ to $4n-5$.
We also introduce the scalar product, \beq \langle {\bf
r}, {\bf \tilde r } \rangle={\bf r}^t M {\bf \tilde r}, \eeq
which allows us to write the moment of inertia as \beq
I=\langle {\bf r}, {\bf r }\rangle=\sum_{i=1}^nm_i\|{\bf r}_i\|^2.
\eeq

\section{Central Configurations}

Central configurations play a crucial role for understanding the dynamics of $n$-body problems \cite{saari80}. In particular,
they have led to important theoretical investigations, such as
Saari's conjecture, which has remained open for more than
three and a half decades \cite{diacu05}, and are connected to
Smale's 6th problem \cite{smale98}, originally proposed by 
Wintner in 1941, \cite{wintner41} (see also \cite{meyer92, ernesto96}). In this section we will
define central configurations and analyze the particular aspects
this concept encounters in the quasihomogeneous case.

\begin{definition}
A configuration ${\bf r}\in \tilde\Omega$ is called central if there is a constant $\sigma$ such that
\beq
\nabla U({\bf r})=\sigma\nabla I({\bf r}).
\eeq
\end{definition}
Using the fact that the functions $W$ and $V$ are homogeneous of degree $-a$ and $-b$, respectively, and applying Euler's theorem for homogeneous functions, we find that
\beq
\sigma=\frac{-aW({\bf r})-bV({\bf r})}{2 I({\bf r})}.
\eeq
\begin{definition}
We call ${\bf r}\in \tilde\Omega$ a simultaneous central configuration for the potentials $W$ and $V$ if there are constants $\sigma_1$ and $\sigma_2$ such that
\[
\nabla W({\bf r})=\sigma_1\nabla I({\bf r}) \quad \mbox{and} \quad \nabla V({\bf r})=\sigma_2\nabla I({\bf r}).
\]
\end{definition}

Using the fact that $W$ and $V$ are homogeneous functions of degree $-a$ and $-b$, respectively, we find that
\beq
\sigma_1=\frac{-aW({\bf r})}{2I({\bf r})} \quad \mbox{and}\quad \sigma_2=\frac{-bV({\bf r})}{2I({\bf r})}.
\eeq
Note that if ${\bf r}$ is a simultaneous central configuration for $W$ and $V$, then ${\bf r}$ is also a central configuration for $U=V+W$.
The converse is not necessarily true.

Let
\[
S_{I_0}=\{{\bf r}\in \Omega|\langle {\bf r},{\bf r} \rangle=I_0\}
\]
be the sphere relative to the metric given by the
scalar product, and denote by 
\[S^*_{I_0}=S_{I_0}\setminus \Delta
=\{{\bf r}\in\tilde \Omega|\langle{\bf r},{\bf r}\rangle=I_0\}\]
this sphere minus the collision set.
Then the central configurations with moment of inertia $I_0$ can 
also be defined as the critical points of $U_{S_I}$, where
$U_{S_I}:S^*_{I}\rightarrow {\mathbb R}$ is the restriction of the
potential $U$ to $S^*_{I_0}$. Denote by $C_n$ the set of
central configurations of the quasihomogeneous $n$-body problem. 
\begin{definition}
We say that two relative equilibria in $S^*_{I_0}$ are equivalent (and
belong to the same equivalence class) if they can be made
congruent by the induced $S^1$ action on $S^*_{I_0}$, that is, if
one is obtained from the other by a rotation. 
\end{definition}
Let $\tilde C_n$ denote the set of equivalence classes of central configurations. Note that this definition differs from the one used in 
the Newtonian case (see \cite{abraham,smale71}), where two central configuration are called equivalent when one can be obtained from the other  by a rotation and/or a homothety. This change is necessary in the quasihomogeneous case because the set $C_n$ is invariant under the action of the group $S^1$, but not necessarily under the action of homotheties (see Section 7).

Clearly, $I$ and $\Delta$ are invariant under the action of $S^1$.
Thus, we can conclude that $S^*_{I_0}$ is diffeomorphic to the $(2n-3)$-dimensional sphere $S^{2n-3}$ (which is actually an ellipsoid $E^{2n-3}$) with all the points $\Delta $ removed, that is,
\[S^*_{I_0}=E^{2n-3}\setminus (E^{2n-3}\cap\Delta)\approx S^{2n-3}\setminus (S^{2n-3}\cap\Delta).\]
 Since $U_{S_I}$ is invariant under the action of $S^1$, it defines a map $\tilde U_{S_I}:S^*_{I_0}/S^1\rightarrow {\mathbb R}$.
 If we let $\pi_n:S^*_{I_0}\rightarrow S^*_{I_0}/S^1$ denote the canonical projection, $\tilde\Delta=\pi(E^{2n-3}\cap\Delta)$, and recalling that $E^{2n-3}/S^1\approx S^{2n-3}/S^1\approx {\mathbb C}P^{n-2}$ (the complex projective space), we are led to investigate the critical points of $\tilde U_{S_I}:{\mathbb C}P^{n-2}\setminus \tilde\Delta \rightarrow {\mathbb R}$.
 
Consequently we can show that the set of equivalence classes of central configurations with fixed moment of inertia $I_0$ is given by the set of critical points of the map $\tilde U_{S_I}:{\mathbb C}P^{n-2}\setminus \tilde\Delta\rightarrow {\mathbb R}$. More precisely, we have proved the following property:

\begin{prop}
For any choice of masses in the planar $n$-body problem with a quasihomogeneous potential, $n\geq 2$, the set of equivalence classes of central configurations with moment of inertia $I_0$ is diffeomorphic with the set of critical points of the map $\tilde U_{S_I}:{\mathbb C}P^{n-2}\setminus\tilde\Delta\rightarrow {\mathbb R}$.
\end{prop}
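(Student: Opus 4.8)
The plan is to assemble the proof from the chain of identifications that the surrounding text has already set up, checking at each stage that the relevant structure (smoothness of the map, freeness or near-freeness of the $S^1$-action, and compatibility of critical points with quotients) is preserved.

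First I would make precise the identification $S^*_{I_0}\approx S^{2n-3}\setminus(S^{2n-3}\cap\Delta)$: since the scalar product $\langle\cdot,\cdot\rangle$ is positive definite on $\Omega$, the level set $\{I=I_0\}$ is an ellipsoid in $\Omega\cong\mathbb{R}^{2n-2}$, hence diffeomorphic to $S^{2n-3}$, and removing the (closed, $S^1$-invariant) collision set $\Delta$ gives an open invariant submanifold. Next I would verify that the $S^1$-action by simultaneous rotation of all bodies is \emph{free} on $S^*_{I_0}$: a configuration fixed by a nontrivial rotation would have every ${\bf r}_i$ at the origin, forcing $I=0\neq I_0$. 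Freeness (together with compactness of $S^1$) guarantees that the quotient $S^*_{I_0}/S^1$ is a smooth manifold and $\pi_n$ is a smooth submersion, and that the quotient of the ambient ellipsoid is $\mathbb{C}P^{n-2}$ via the Hopf-type fibration $S^{2n-3}\to\mathbb{C}P^{n-2}$ (using the complex structure on $(\mathbb{R}^2)^n$). Hence $S^*_{I_0}/S^1$ is diffeomorphic to $\mathbb{C}P^{n-2}\setminus\tilde\Delta$.

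Then I would show that $U_{S_I}$, being $S^1$-invariant, descends to a smooth map $\tilde U_{S_I}$ on the quotient, and that ${\bf r}$ is a critical point of $U_{S_I}$ if and only if $\pi_n({\bf r})$ is a critical point of $\tilde U_{S_I}$. This is the standard fact that for a submersion $\pi$ and a map $f=\tilde f\circ\pi$, one has $df=d\tilde f\circ d\pi$, and since $d\pi$ is surjective with kernel the tangent to the (isolated, because the action is free) $S^1$-orbit — along which $f$ is constant, so $df$ vanishes — the critical points of $f$ are exactly the preimages of critical points of $\tilde f$. Combined with the characterization (recalled in the excerpt, via Lagrange multipliers and Euler's theorem) that central configurations with moment of inertia $I_0$ are precisely the critical points of $U_{S_I}$, this shows that $\pi_n$ restricts to a bijection between $C_n\cap S^*_{I_0}$ modulo $S^1$ and the critical set of $\tilde U_{S_I}$. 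Finally, I would upgrade ``bijection'' to ``diffeomorphism'' in the (generic) case where the critical sets are submanifolds, or simply note — as the proposition's phrasing ``diffeomorphic with the set of critical points'' suggests — that the critical set of $\tilde U_{S_I}$ is by definition given this smooth structure and $\pi_n$ provides the diffeomorphism onto it by restriction of the submersion.

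The main obstacle I anticipate is not any single hard estimate but the bookkeeping around the quotient: one must be careful that $\tilde\Delta=\pi_n(S^{2n-3}\cap\Delta)$ is closed (so that its complement is genuinely an open manifold on which $\tilde U_{S_I}$ is defined and smooth), and that the passage ``critical points downstairs $\leftrightarrow$ critical orbits upstairs'' is clean precisely \emph{because} the $S^1$-action is free — this is exactly the point where the quasihomogeneous setting differs from the classical one and why the homothety is deliberately excluded from the equivalence relation (homotheties do not preserve $C_n$, as the excerpt flags and Section 7 will confirm). Everything else is a matter of stringing together the diffeomorphisms $S^*_{I_0}\approx E^{2n-3}\setminus(\cdots)$, $E^{2n-3}/S^1\approx S^{2n-3}/S^1\approx\mathbb{C}P^{n-2}$, and the submersion property of $\pi_n$, all of which are either elementary or already recorded above.
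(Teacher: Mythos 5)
Your proposal is correct and follows essentially the same route the paper takes: the paper's ``proof'' is precisely the chain of identifications $S^*_{I_0}\approx E^{2n-3}\setminus(E^{2n-3}\cap\Delta)\approx S^{2n-3}\setminus(S^{2n-3}\cap\Delta)$, the descent of the $S^1$-invariant $U_{S_I}$ to $\tilde U_{S_I}$ on $E^{2n-3}/S^1\approx{\mathbb C}P^{n-2}$, and the characterization of central configurations with moment of inertia $I_0$ as critical points of the restricted potential. Your additional checks (freeness of the rotation action, the submersion property of $\pi_n$, and the closedness of $\tilde\Delta$) only make explicit what the paper leaves implicit.
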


\section{Moulton's Theorem for Quasihomogeneous Potentials}

We will now study collinear central configurations and, using critical point theory, will calculate the number of classes of such configurations for any number $n$ of bodies. The goal of this section is to prove the following result, which generalizes a theorem obtained by Forest Ray Moulton in 1910, \cite{moulton10}.

\begin{theo}
For any choice of masses in the $n$-body problem with a quasihomogeneous potential, $U$, and any given moment of inertia, $I_0$, there are exactly $n!/2$ classes of collinear central configurations. In other words, there are $n!/2$ classes of central configurations ${\bf r}=({\bf r}_1,\ldots,{\bf r}_n)$, where all ${\bf r}_i$ belong to the same straight line through the origin.
\end{theo}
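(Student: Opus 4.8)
The plan is to sort collinear central configurations by the order in which the bodies appear on the line and to show that on each such ``ordering cell'' the potential $U$ is convex, so that prescribing the moment of inertia selects exactly one configuration per cell. Since the set of collinear central configurations is $S^1$-invariant and every line through the origin is $S^1$-equivalent to a fixed line $\ell$, it suffices to count the central configurations supported on $\ell$ and then pass to $S^1$-classes. Inside the $(n-1)$-dimensional subspace $\Omega_\ell\subset\Omega$ of configurations on $\ell$, the complement of $\Delta$ is the disjoint union of $n!$ open convex cones $\mathcal C_s$, one for each ordering $s\in S_n$ of the bodies along $\ell$. The only nontrivial rotation mapping $\Omega_\ell$ to itself is the rotation by $\pi$; it acts as $x\mapsto -x$, carries $\mathcal C_s$ onto $\mathcal C_{\bar s}$ (the reversed ordering), and fixes no configuration. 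Hence, once one shows that each $\mathcal C_s$ contains exactly one central configuration with moment of inertia $I_0$, one obtains $n!$ collinear central configurations on $\ell$, and this free $\mathbb Z_2$-action groups them into exactly $n!/2$ equivalence classes.

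The next step is to prove that $U$ is strictly convex on each $\mathcal C_s$. On $\mathcal C_s$ every mutual distance $\|{\bf r}_i-{\bf r}_j\|$ is the restriction of a linear functional of constant sign, and $t\mapsto t^{-a}$, $t\mapsto t^{-b}$ are convex on $(0,\infty)$, strictly so whenever the exponent is positive. Thus each term of $W$ and of $V$ is convex on $\mathcal C_s$, so $W$ and $V$ are convex there; moreover $b>0$ and no nonzero vector of $\Omega_\ell$ has all its differences $v_i-v_j$ equal to zero (such a vector would be a multiple of $(1,\dots,1)\notin\Omega_\ell$), so the $V$-part, and hence $U=W+V$, is in fact strictly convex on the convex open set $\mathcal C_s$ --- including in the degenerate case $a=0$, where $W$ is constant.

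For existence, $\mathcal C_s\cap S_{I_0}$ is a bounded open cell of $S_{I_0}\cap\Omega_\ell$ whose topological boundary consists of collision configurations, along which $U\to+\infty$; since $U>0$, its infimum on this cell is attained at an interior point, which is a collinear central configuration with moment of inertia $I_0$. For uniqueness, let $x\in\mathcal C_s\cap S_{I_0}$ be such a configuration; by Euler's theorem its multiplier $\sigma_0=\sigma(x)=-(aW(x)+bV(x))/(2I_0)$ is strictly negative, so $x$ is a critical point of $U-\sigma_0 I$, a strictly convex function on the convex open set $\mathcal C_s$, and therefore its unique global minimizer there. Given two central configurations $x_1,x_2\in\mathcal C_s\cap S_{I_0}$ with multipliers $\sigma_1,\sigma_2$, minimality of $x_1$ for $U-\sigma_1 I$ gives $U(x_1)\le U(x_2)$, and by symmetry $U(x_1)=U(x_2)$; but then $x_2$ also attains the minimum value of $U-\sigma_1 I$ on $\mathcal C_s$, so $x_2=x_1$ by strict convexity. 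Combined with the reduction of the first paragraph, this proves the theorem.

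The \emph{main difficulty} is conceptual rather than computational: the quasihomogeneous potential is not scale invariant, so --- unlike in Moulton's original argument --- one cannot normalize $I$ by a dilation, and ``class'' cannot be taken up to homotheties; fixing $I_0$ together with the ``compare the Lagrange multipliers'' device used above is exactly what replaces the missing dilation symmetry. The only point requiring genuine care in the computations is verifying that convexity on $\mathcal C_s$ is strict rather than merely weak, in particular in the case $a=0$.
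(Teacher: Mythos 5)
Your proposal is correct and shares the paper's skeleton --- reduce to a fixed line, split the collision-free collinear configurations into $n!$ ordering components, get one central configuration per component by convexity, and quotient by the free $\mathbb{Z}_2$ action of the rotation by $\pi$ --- but it handles the analytic core differently. The paper works on the quotient $\mathbb{R}P^{n-2}\setminus\tilde\Delta\subset\mathbb{C}P^{n-2}\setminus\tilde\Delta$ and computes (Lemma \ref{derivatives}(2),(3)) that the Hessian of the restriction of $U$ to the sphere $S_{I_0}$ is positive definite along collinear directions, concluding that the restricted potential has a unique minimum in each of the $n!/2$ components of $\mathbb{R}P^{n-2}$. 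You instead prove strict convexity of $U$ itself on each flat convex ordering cone (each mutual distance being a positive linear functional there), obtain existence by properness ($U\to+\infty$ at the collision boundary of the cell) and uniqueness by the Lagrange-multiplier comparison: since $\sigma<0$, each central configuration is the unique global minimizer of the strictly convex function $U-\sigma I$ on the cone, and comparing the values of $U$ at two putative configurations forces them to coincide. This is arguably more elementary (no Hessian-on-the-sphere computation, no appeal to critical point theory on projective space), it isolates cleanly why fixing $I_0$ replaces the missing dilation symmetry, and your explicit treatment of $a=0$ is a nice touch the paper does not spell out. One step you state but do not justify, and which the paper proves separately as Proposition \ref{restriction}: the minimizer of $U$ on the collinear cell $\mathcal{C}_s\cap S_{I_0}$ is a critical point of the \emph{restricted} problem on $\Omega_\ell$, and you must check it is a genuine central configuration in the plane. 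This follows because at a collinear ${\bf r}$ both $DU({\bf r})$ and $DI({\bf r})$ annihilate every direction ${\bf v}^\perp$ orthogonal to $\ell$ (the first derivative of $U$ involves only the products $({\bf r}_i-{\bf r}_j,{\bf v}_i-{\bf v}_j)$ with ${\bf r}_i-{\bf r}_j$ along $\ell$, and $\langle{\bf r},{\bf v}^\perp\rangle=0$); you should make this explicit, but it is an easy verification rather than a gap in the argument.
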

In preparation for the proof, choose some line $l$  in ${\mathbb R}^2$. This defines a subset $\Omega_l\subset\Omega$ of ${\bf r}=({\bf r}_1,\ldots,{\bf r}_n)$ such that each ${\bf r}_i$  is on the line $l$. Let $S_l=S_{I_0}\cap \Omega_l$ and $S_l^*=S_l\setminus\cap(S_l\cap \Delta)$. When $S^1$ acts on $S_{I_0}$, only the rotation by $\pi$ radians leaves $S_l$ invariant. Thus the group ${\mathbb Z}_2$ acts on $S_l$, and on  the quotient we have
${\mathbb R}P^{n-2}\setminus\tilde\Delta\subset {\mathbb C}P^{n-2}\setminus\tilde\Delta\xrightarrow{\tilde U_{S_I}} {\mathbb R}$, where ${\mathbb R}P^{n-2}=S_l\backslash {\mathbb Z}_2$ is the real projective space, naturally contained in ${\mathbb C}P^{n-2}$. Here $\tilde U_{S_I}$ is induced by the potential energy.
From these considerations we obtain:
\begin{lemma}
The set of equivalence classes of collinear central configurations with moment of inertia $I_0$ is diffeomorphic to the set of critical points of $\tilde U_{S_I}:{\mathbb C}P^{n-2}\setminus \tilde\Delta\rightarrow {\mathbb R}$ that lie in ${\mathbb R}P^{n-2}\setminus \tilde\Delta \subset{\mathbb C}P^{n-2}\setminus\tilde\Delta$.
\end{lemma}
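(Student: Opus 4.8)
The plan is to realize the collinear central configurations on a fixed line $l$ as the critical points of the restriction of $U$ to the small sphere $S_l^*$, and then to match these with the critical points of $\tilde U_{S_I}$ on ${\mathbb C}P^{n-2}\setminus\tilde\Delta$ supplied by Proposition 1. First I would apply the variational characterization of central configurations from Section 3 within $\Omega_l$ rather than $\Omega$: restricting the Lagrange-multiplier equation $\nabla U=\sigma\nabla I$ to vectors tangent to $\Omega_l$ is exactly the Euler--Lagrange equation for $U|_{S_l^*}$, so the collinear central configurations on $l$ with moment of inertia $I_0$ are precisely the critical points of $U|_{S_l^*}$. Since $I_0>0$, every configuration in $S_l^*$ genuinely spans $l$, so the only rotations carrying a configuration on $l$ to another configuration on $l$ are the identity and the rotation by $\pi$; hence the relevant symmetry group drops from $S^1$ to ${\mathbb Z}_2$, and the equivalence classes of such configurations correspond to the critical points of the induced map on $S_l^*/{\mathbb Z}_2={\mathbb R}P^{n-2}\setminus\tilde\Delta$. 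Because every collinear central configuration can be rotated onto $l$, this accounts for all equivalence classes of collinear central configurations.

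The crux is then to show that a point of $S_l^*$ is a critical point of $U|_{S_l^*}$ if and only if it is a critical point of $U|_{S^*_{I_0}}$, equivalently of $\tilde U_{S_I}$ on ${\mathbb C}P^{n-2}\setminus\tilde\Delta$. One direction is immediate: a critical point of $U|_{S^*_{I_0}}$ that happens to lie in $S_l^*$ restricts to a critical point of $U|_{S_l^*}$. For the converse I would invoke the principle of symmetric criticality. Let $\rho$ be the reflection of ${\mathbb R}^2$ across $l$, acting diagonally and isometrically on $\Omega$; then $\Omega_l$ is exactly its fixed-point set, and both $I$ and $U$ are $\rho$-invariant because $\rho$ preserves all mutual distances. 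Consequently $S_l^*$ is the $\rho$-fixed set in $S^*_{I_0}$, and at any ${\bf r}\in S_l^*$ the gradient (with respect to the $\rho$-invariant metric $\langle\cdot,\cdot\rangle$) of $U|_{S^*_{I_0}}$ is $\rho$-equivariant, hence fixed by $d\rho_{\bf r}$, hence lies in the $(+1)$-eigenspace of $d\rho_{\bf r}$, which is precisely $T_{\bf r}S_l^*$. Since $\nabla(U|_{S_l^*})({\bf r})$ is the orthogonal projection of $\nabla(U|_{S^*_{I_0}})({\bf r})$ onto $T_{\bf r}S_l^*$ and the latter vector already lies in $T_{\bf r}S_l^*$, its vanishing along $S_l^*$ forces it to vanish on all of $T_{\bf r}S^*_{I_0}$, so ${\bf r}$ is critical for $U|_{S^*_{I_0}}$.

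Finally I would assemble the two steps: the equivalence classes of collinear central configurations with moment of inertia $I_0$ correspond bijectively to the critical points of $\tilde U_{S_I}$ that lie in ${\mathbb R}P^{n-2}\setminus\tilde\Delta\subset{\mathbb C}P^{n-2}\setminus\tilde\Delta$, and this correspondence is a diffeomorphism onto that subset because $\pi_n$ and its restriction to $S_l^*$ are submersions onto their (orbifold-)smooth quotients, $U$ is smooth on $\tilde\Omega$, and the inclusion ${\mathbb R}P^{n-2}\hookrightarrow{\mathbb C}P^{n-2}$ is a smooth embedding compatible with both projections. I expect the middle step to be the only real obstacle: critical points of a restriction are \emph{not} in general critical points of the ambient function, so identifying ``collinear central configurations'' with ``central configurations that happen to be collinear'' genuinely relies on the reflection symmetry across $l$, and one must check carefully that $U$, $I$, and $S^*_{I_0}$ are all $\rho$-invariant so that Palais's principle applies.
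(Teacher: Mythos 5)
Your argument is correct, but the key step is handled by a genuinely different mechanism than the paper's. The paper obtains the lemma almost immediately from Proposition~1 together with the identification $S_l/{\mathbb Z}_2\cong{\mathbb R}P^{n-2}\subset{\mathbb C}P^{n-2}$: with its definition of a collinear central configuration as a planar central configuration whose points happen to lie on a line, the statement is just ``restrict the bijection of Proposition~1 to the classes landing in ${\mathbb R}P^{n-2}\setminus\tilde\Delta$.'' The nontrivial equivalence you concentrate on --- that a critical point of the restriction to ${\mathbb R}P^{n-2}\setminus\tilde\Delta$ is already a critical point on ${\mathbb C}P^{n-2}\setminus\tilde\Delta$ --- is exactly what the paper isolates as the subsequent Proposition~\ref{restriction}, and it proves that by a bare-hands computation: by Lemma~\ref{derivatives}(1), for ${\bf r}$ collinear every term of $DU({\bf r})({\bf v})$ involves $({\bf r}_i-{\bf r}_j,{\bf v}_i-{\bf v}_j)$ with ${\bf r}_i-{\bf r}_j\in l$, so $DU({\bf r})({\bf v})=DU({\bf r})({\bf v}^\parallel)$ and the perpendicular directions contribute nothing. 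You instead invoke Palais's principle of symmetric criticality for the reflection $\rho$ across $l$, whose fixed-point set in $S^*_{I_0}$ is $S_l^*$; your verification that $U$, $I$, $\Delta$ and the mass metric are all $\rho$-invariant is what makes that legitimate, and the eigenspace argument for the gradient is sound. Your route is more conceptual and would apply verbatim to any $O(2)$-invariant potential without computing derivatives; the paper's route is more elementary and has the side benefit that the same explicit formulas (the second derivatives in Lemma~\ref{derivatives}) are reused for the convexity argument that finishes Moulton's theorem. The only caveat is organizational: you fold Proposition~\ref{restriction} into the proof of the lemma, which is harmless (indeed necessary if one reads ``collinear central configuration'' as a critical point of the restricted collinear problem), but strictly more than the lemma as the paper uses it requires.
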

So in order to describe the collinear central configurations, it is sufficient to obtain the critical points of the potential that lie in the real projective space. 
In general, a critical point of a function restricted to a submanifold is not necessarily a critical point of the function on the ambient manifold. However, we have the following result:
\begin{prop}\label{restriction}
If ${\bf r}\in{\mathbb R}P^{n-2}\setminus\tilde\Delta$ is a critical point of $\tilde U_{S_I}:{\mathbb R}P^{n-2}\setminus\tilde\Delta\rightarrow {\mathbb R}$, then ${\bf r}$ is also a critical point of $\tilde U_{S_I}:{\mathbb C}P^{n-2}\setminus\tilde\Delta\rightarrow {\mathbb R}$.
\end{prop}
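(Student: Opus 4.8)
The plan is to exploit the symmetry of the potential under the reflection that fixes the line $l$. Let $\tau\colon\mathbb{R}^2\to\mathbb{R}^2$ be the orthogonal reflection across $l$, acting diagonally on $\Omega$; write $\tau$ also for the induced map on $S^*_{I_0}$ and, after passing to the quotient by $S^1$, on $\mathbb{C}P^{n-2}\setminus\tilde\Delta$. The key observation is that $U$, $I$, and $\Delta$ are all invariant under $\tau$, because each pairwise distance $\|{\bf r}_i-{\bf r}_j\|$ is unchanged by an isometry of $\mathbb{R}^2$; hence $\tilde U_{S_I}\circ\tau=\tilde U_{S_I}$ on $\mathbb{C}P^{n-2}\setminus\tilde\Delta$. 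The fixed-point set of $\tau$ acting on $S^*_{I_0}$ is exactly $S^*_l$, and after quotienting by the residual $\mathbb{Z}_2$ it is exactly $\mathbb{R}P^{n-2}\setminus\tilde\Delta$. So the statement reduces to the general principle: a critical point of a $G$-invariant smooth function that lies on the fixed-point submanifold of $G$ is a critical point of the ambient function (here $G=\mathbb{Z}_2$ generated by $\tau$).

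The second step is to prove that general principle in this concrete setting. Let ${\bf r}\in\mathbb{R}P^{n-2}\setminus\tilde\Delta$ be a critical point of the restriction $\tilde U_{S_I}|_{\mathbb{R}P^{n-2}\setminus\tilde\Delta}$. Since $\tau$ is an isometry of $\mathbb{C}P^{n-2}$ (with respect to the Fubini–Study–type metric induced from $\langle\cdot,\cdot\rangle$) fixing ${\bf r}$, its differential $d\tau_{\bf r}$ is a linear involution of the tangent space $T_{\bf r}(\mathbb{C}P^{n-2}\setminus\tilde\Delta)$, which therefore splits as $T^{+}\oplus T^{-}$ into the $(+1)$- and $(-1)$-eigenspaces. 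The $(+1)$-eigenspace $T^{+}$ is precisely the tangent space to the fixed-point set $\mathbb{R}P^{n-2}\setminus\tilde\Delta$ at ${\bf r}$. Differentiating the identity $\tilde U_{S_I}\circ\tau=\tilde U_{S_I}$ at ${\bf r}$ gives $d(\tilde U_{S_I})_{\bf r}\circ d\tau_{\bf r}=d(\tilde U_{S_I})_{\bf r}$; applied to a vector $v\in T^{-}$ this yields $-d(\tilde U_{S_I})_{\bf r}(v)=d(\tilde U_{S_I})_{\bf r}(v)$, hence $d(\tilde U_{S_I})_{\bf r}$ vanishes on $T^{-}$. By hypothesis it also vanishes on $T^{+}$, since ${\bf r}$ is a critical point of the restriction. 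Therefore $d(\tilde U_{S_I})_{\bf r}=0$ on all of $T_{\bf r}(\mathbb{C}P^{n-2}\setminus\tilde\Delta)$, i.e.\ ${\bf r}$ is a critical point of $\tilde U_{S_I}$ on the ambient space.

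The main obstacle, and the place requiring care rather than ingenuity, is the bookkeeping on the quotient. One must check that the relevant objects descend cleanly: that $\tau$ commutes with the $S^1$-action up to the identifications made (reflection conjugates a rotation by $\theta$ to the rotation by $-\theta$, so it does induce a well-defined map on $S^*_{I_0}/S^1$), that the fixed-point set of the induced involution is genuinely $\mathbb{R}P^{n-2}\setminus\tilde\Delta$ and not something larger, and that $\tilde U_{S_I}$ is smooth near ${\bf r}$ so that the differential arguments are legitimate — this is where excising $\tilde\Delta$ is used. One could alternatively avoid the quotient altogether and argue directly upstairs on $S^*_{I_0}$: a collinear configuration on $l$ is a critical point of $U_{S_l}$ iff $\nabla U({\bf r})-\sigma\nabla I({\bf r})$ is orthogonal (in the $M$-metric) to $T_{\bf r}S^*_l$; the $\tau$-invariance of $U$ and $I$ forces this vector to be $\tau$-invariant, hence to lie in $\Omega_l$, hence — being already orthogonal to $T_{\bf r}S^*_l$ within $\Omega_l$ — to be orthogonal to all of $T_{\bf r}S^*_{I_0}$, giving a genuine central configuration. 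Either route works; I would present the eigenspace-splitting argument as the cleaner one, and remark that it is the standard fact that fixed points of a compact group action inherit criticality of invariant functions.
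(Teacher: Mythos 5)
Your argument is correct, but your primary route is genuinely different from the paper's. The paper proves the proposition by brute force from Lemma \ref{derivatives}: it decomposes each tangent vector as ${\bf v}=({\bf v}^\parallel,{\bf v}^\perp)$ relative to the line $l$, notes that ${\bf v}^\parallel\in T_{\bf r}(S_l)$ whenever ${\bf v}\in T_{\bf r}(S_{I_0})$, and reads off from the explicit formula for $DU$ that $DU({\bf r})({\bf v})=DU({\bf r})({\bf v}^\parallel)$ at a collinear ${\bf r}$ (because $({\bf r}_i-{\bf r}_j,{\bf v}_i-{\bf v}_j)$ only sees the component of ${\bf v}_i-{\bf v}_j$ along $l$). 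Your eigenspace-splitting argument replaces this computation by the principle of symmetric criticality for the reflection $\tau$ across $l$: it is more conceptual, needs no formula for $DU$, and would apply verbatim to any $\tau$-invariant potential; the cost is the bookkeeping you yourself flag, in particular the verification that the fixed-point set of the induced involution on ${\mathbb C}P^{n-2}\setminus\tilde\Delta$ is exactly ${\mathbb R}P^{n-2}\setminus\tilde\Delta$, so that $T^{+}$ really equals $T_{\bf r}({\mathbb R}P^{n-2}\setminus\tilde\Delta)$. That step is true but should be spelled out: a class fixed by the induced map satisfies $\tau({\bf r})=R_\theta{\bf r}$ for some rotation $R_\theta$, so ${\bf r}$ is fixed by the reflection $R_{-\theta}\tau$ and hence is a collinear configuration, representing a point of ${\mathbb R}P^{n-2}$. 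The paper's approach avoids the quotient entirely and reuses Lemma \ref{derivatives}, which it needs anyway for the convexity argument in the proof of Moulton's theorem; your ``alternative upstairs'' sketch in the last paragraph is essentially the paper's proof.
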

To prove this, we first need to know the derivatives of the potential function, which are given below.
\begin{lemma}\label{derivatives}
For given masses $m_1,\ldots, m_n$ and $U=W+V$,
\begin{enumerate}
\item The first derivative of $U:\tilde\Omega\rightarrow {\mathbb R}$ is 
\[\begin{split}
DU({\bf r})({\bf v})=&-a\sum_{i\neq j}\frac{m_im_j}{\|{\bf r}_i-{\bf r}_j\|^{a+2}}({\bf r}_i-{\bf r}_j,{\bf v}_i-{\bf v}_j)\\
&-b\sum_{i\neq j}\frac{m_im_j}{\|{\bf r}_i-{\bf r}_j\|^{b+2}}({\bf r}_i-{\bf r}_j,{\bf v}_i-{\bf v}_j)\end{split}\]
for ${\bf v}\in \Omega$.
\item The $2\mbox{nd}$ derivative is
\[\begin{split}
&D^2U({\bf r})({\bf v},{\bf w})= a\sum_{i\neq j}\frac{m_im_j}{\|{\bf r}_i-{\bf r}_j\|^{a+2}}\\
&\cdot\left( \frac{a+2}{\|{\bf r}_i-{\bf r}_j\|^{2}}({\bf r}_i-{\bf r}_j,{\bf v}_i-{\bf v}_j)({\bf r}_i-{\bf r}_j,{\bf w}_i-{\bf w}_j)-({\bf v}_i-{\bf v}_j,{\bf w}_i-{\bf w}_j)\right)\\
&+ b\sum_{i\neq j}\frac{m_im_j}{\|{\bf r}_i-{\bf r}_j\|^{b+2}}\\
&\cdot\left( \frac{b+2}{\|{\bf r}_i-{\bf r}_j\|^{2}}({\bf r}_i-{\bf r}_j,{\bf v}_i-{\bf v}_j)({\bf r}_i-{\bf r}_j,{\bf w}_i-{\bf w}_j)-({\bf v}_i-{\bf v}_j,{\bf w}_i-{\bf w}_j)\right),
\end{split}
\]where ${\bf v},{\bf w}\in \Omega$.
\item The $2\mbox{nd}$ derivative of the restriction $U:S^*_{I_0}\rightarrow {\mathbb R}$ is:
\[
D^2U/(S^*_{I_0})({\bf r})({\bf v},{\bf w})=D^2U({\bf r})({\bf v},{\bf w})+\frac{aW({\bf r})+bV({\bf r})}{I_0}\langle {\bf v},{\bf w}\rangle.
\]
\end{enumerate}
Here $(\cdot,\cdot)$ denotes the usual inner product in ${\mathbb
R}^2$, $\|\cdot\|$ the norm in ${\mathbb R}^2$, and $I$ the moment of
inertia. The same formulas are valid in ${\mathbb R}$, ${\mathbb
R^2}$ and ${\mathbb R}^3$.
\end{lemma}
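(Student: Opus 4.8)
The plan is to get parts (1) and (2) by differentiating the explicit sums defining $W$ and $V$, and part (3) by combining (2) with the intrinsic geometry of the sphere $S_{I_0}$ and Euler's theorem for homogeneous functions.

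For (1) I would differentiate each summand $m_im_j\|{\bf r}_i-{\bf r}_j\|^{-c}$ (with $c=a$ for $W$ and $c=b$ for $V$) by the chain rule, regarding $\|{\bf r}_i-{\bf r}_j\|^2$ as the inner variable and noting that its directional derivative in the direction ${\bf v}\in\Omega$ is $2({\bf r}_i-{\bf r}_j,{\bf v}_i-{\bf v}_j)$. Summing over pairs and then rewriting the double sum over ordered pairs $i\neq j$ --- which is legitimate because each summand is invariant under the interchange $i\leftrightarrow j$ --- leads to the expression in (1). For (2) I would differentiate the formula of (1) once more in a direction ${\bf w}\in\Omega$ and apply the product rule to each summand: differentiating the scalar factor $\|{\bf r}_i-{\bf r}_j\|^{-c-2}$ produces the term containing $(c+2)\|{\bf r}_i-{\bf r}_j\|^{-2}({\bf r}_i-{\bf r}_j,{\bf v}_i-{\bf v}_j)({\bf r}_i-{\bf r}_j,{\bf w}_i-{\bf w}_j)$, while differentiating the linear factor $({\bf r}_i-{\bf r}_j,{\bf v}_i-{\bf v}_j)$ produces $-({\bf v}_i-{\bf v}_j,{\bf w}_i-{\bf w}_j)$; grouping the $c=a$ and $c=b$ contributions gives the stated formula.

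For (3) I would regard $S_{I_0}$ as a round sphere inside the inner-product space $(\Omega,\langle\cdot,\cdot\rangle)$. Its great-circle geodesics through ${\bf r}$ have the form $\gamma(t)=\cos(\omega t){\bf r}+\sin(\omega t){\bf u}$ with $\langle{\bf u},{\bf r}\rangle=0$ and $\langle{\bf u},{\bf u}\rangle=I_0$, so that $\ddot\gamma(0)=-I_0^{-1}\langle\dot\gamma(0),\dot\gamma(0)\rangle\,{\bf r}$. Since the second derivative of the restriction (the covariant Hessian of $U|_{S^*_{I_0}}$) is the symmetric bilinear form whose diagonal is $\frac{d^2}{dt^2}\big|_{0}U(\gamma(t))$ along such geodesics, writing ${\bf v}=\dot\gamma(0)$ I get $D^2U/(S^*_{I_0})({\bf r})({\bf v},{\bf v})=D^2U({\bf r})({\bf v},{\bf v})+DU({\bf r})(\ddot\gamma(0))=D^2U({\bf r})({\bf v},{\bf v})-I_0^{-1}\langle{\bf v},{\bf v}\rangle\,DU({\bf r})({\bf r})$. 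Because $W$ and $V$ are homogeneous of degrees $-a$ and $-b$, Euler's theorem gives $DW({\bf r})({\bf r})=-aW({\bf r})$ and $DV({\bf r})({\bf r})=-bV({\bf r})$, hence $DU({\bf r})({\bf r})=-aW({\bf r})-bV({\bf r})$; substituting and polarizing (both sides being symmetric and bilinear in the pair $({\bf v},{\bf w})$) yields (3).

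I expect the only genuinely delicate point to be the interpretation of ``the second derivative of the restriction'' in (3). It must be taken as the intrinsic (covariant) Hessian --- equivalently, as the second derivative along the great-circle geodesics of $S_{I_0}$ --- because along a general curve on the sphere the acceleration $\ddot\gamma(0)$ has a tangential component, and its pairing with $DU({\bf r})$ need not vanish away from the critical points of $U|_{S^*_{I_0}}$. One must also be careful that the normal direction to $S_{I_0}$ at ${\bf r}$ is computed with respect to the mass scalar product $\langle{\bf r},\tilde{\bf r}\rangle={\bf r}^tM\tilde{\bf r}$ --- so that it is exactly the radial direction ${\bf r}$ --- and not with respect to the Euclidean one. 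Finally, none of these computations uses the planarity of the configurations, which is why the same formulas remain valid for ${\bf r}_i\in{\mathbb R}$, ${\mathbb R}^2$, or ${\mathbb R}^3$.
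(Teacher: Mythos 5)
Your proposal is correct and follows the same route as the paper, whose proof is only the one-line remark that everything follows by differentiating in Cartesian coordinates; your computation of (1) and (2) is exactly that differentiation carried out, and your reading of (3) as the second derivative along great-circle geodesics of $(S_{I_0},\langle\cdot,\cdot\rangle)$, combined with Euler's theorem to evaluate $DU({\bf r})({\bf r})=-aW({\bf r})-bV({\bf r})$, is the right way to make the restricted Hessian precise and is what the later geodesic-convexity argument for Moulton's theorem actually uses. The only blemish is a normalization you inherit from the statement itself: since $W$ and $V$ are sums over $i<j$ and each differentiated summand is symmetric in $i\leftrightarrow j$, passing to $\sum_{i\neq j}$ as you do doubles the expression unless a factor $\tfrac12$ is inserted; this is harmless for every subsequent use (signs, definiteness, critical points), but you should either keep $\sum_{i<j}$ or note the implicit $\tfrac12$.
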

\begin{proof}
All the equations above can be derived by differentiating in local Cartesian coordinates.
\end{proof}

Now we can give a proof of Proposition \ref{restriction}.
For ${\bf v}_i\in{\mathbb R}^2$, let ${\bf v}_i=(v_i^\parallel,v_i^\perp)$ where $v_i^\parallel\in l$ and
$v_i^\perp\in l^\perp$. Then we can write ${\bf v}=({\bf v}^\parallel,{\bf v}^\perp)$ with ${\bf v}^\parallel=(v_1^\parallel,\ldots,v_n^\parallel)$, ${\bf v}^\perp=(v_1^\perp,\ldots,v_n^\perp)$ for each ${\bf v}\in \Omega$. If ${\bf r}\in S_l\subset S_{I_0}$, ${\bf r}\notin\Delta$, we have $T_{\bf r}(S_{I_0})=\{{\bf v}\in \Omega|\langle {\bf v},{\bf r}\rangle=0\}$ and $T_{\bf r}(S_{l})=\{{\bf w}\in \Omega_l|\langle {\bf w}, {\bf r}\rangle=0\}$ where, as usual, $\Omega$ is endowed with the mass scalar product. If ${\bf v}\in T_{\bf r}(S_{I_0})$ and ${\bf v}=({\bf v}^\parallel,{\bf v}^\perp)$, then ${\bf v}^\parallel\in \Omega_l$ and $\langle {\bf v},{\bf r}\rangle=\langle {\bf v}^\parallel,{\bf r}\rangle$. Thus ${\bf v}^\parallel\in T_{\bf r}(S_l)$, because $\langle {\bf v},{\bf r} \rangle=0$ implies $\langle {\bf v}^\parallel, {\bf r}\rangle=0$.

By Lemma \ref{derivatives} it follows that if ${\bf r}\in S_l\setminus\Delta$ and ${\bf v }\in T_{\bf r}(S_{I_0})$, then $DU({\bf r})({\bf v})=DU({\bf r})({\bf v^\parallel})$. So $DU({\bf r})({\bf v^\parallel})=0$ implies that $DU({\bf r})({\bf v})=0$.
This completes the proof of Proposition \ref{restriction}.

\begin{lemma}
${\mathbb R}P^{n-2}$ has $n!/2$ components.
\end{lemma}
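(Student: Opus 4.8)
The plan is to convert this topological statement into an elementary count of the ordering chambers of a hyperplane arrangement. First I would make $\Omega_l$ explicit: choosing a unit vector ${\bf e}$ along $l$ and writing ${\bf r}_i=x_i{\bf e}$ identifies $\Omega_l$ with $\{(x_1,\dots,x_n)\in{\mathbb R}^n \mid \sum_{i=1}^n m_i x_i=0\}$, a linear space of dimension $n-1$, and under this identification $\Delta\cap\Omega_l=\bigcup_{i<j}\{x_i=x_j\}$ is a finite union of hyperplanes. A point of the complement has $x_1,\dots,x_n$ pairwise distinct, so it lies in exactly one of the sets $U_\tau=\{x_{\tau(1)}<x_{\tau(2)}<\cdots<x_{\tau(n)}\}\cap\Omega_l$, indexed by $\tau\in S_n$. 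Each $U_\tau$ is a nonempty open convex cone in $\Omega_l$: convex as an intersection of open half-spaces, a cone because it is stable under positive scalings, and nonempty because translating all coordinates of any strictly ordered tuple by a common constant preserves the ordering and can be chosen so as to satisfy $\sum m_i x_i=0$. Hence $\Omega_l\setminus\Delta$ has exactly $n!$ connected components.

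Next I would descend to the sphere $S_l=S_{I_0}\cap\Omega_l$. On $\Omega_l$ the moment of inertia $I=\sum_i m_i x_i^2$ is a positive definite quadratic form, so $S_l$ is an $(n-2)$-dimensional ellipsoid, diffeomorphic to $S^{n-2}$. Radial projection $x\mapsto \sqrt{I_0/I(x)}\,x$ gives a homeomorphism $\Omega_l\setminus\{0\}\cong S_l\times{\mathbb R}_{>0}$ that carries each cone $U_\tau$ onto $(U_\tau\cap S_l)\times{\mathbb R}_{>0}$; since $U_\tau$ is connected, so is $U_\tau\cap S_l$. Therefore $S_l^*=S_l\setminus\Delta$ is the disjoint union of the $n!$ nonempty connected open sets $U_\tau\cap S_l$, one per chamber.

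Finally I would pass to the quotient by the ${\mathbb Z}_2$-action coming from the rotation of $l$ by $\pi$, i.e.\ by $x\mapsto -x$. This is the antipodal map on the ellipsoid $S_l$, hence acts freely, and it maps $U_\tau\cap S_l$ onto $U_{\bar\tau}\cap S_l$, where $\bar\tau(k)=\tau(n+1-k)$ is the reversed permutation. For $n\ge 2$ one never has $\bar\tau=\tau$, since $\bar\tau=\tau$ would force $\tau(k)=\tau(n+1-k)$ for every $k$, impossible for a bijection when $n\ge2$. Thus the induced involution on the set of $n!$ chambers has no fixed point and partitions it into $n!/2$ two-element orbits, so the quotient ${\mathbb R}P^{n-2}\setminus\tilde\Delta=S_l^*/{\mathbb Z}_2$ has exactly $n!/2$ connected components, as asserted.

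I do not expect a genuine obstacle here: the proof is combinatorial bookkeeping, and the only two points that require a (short) verification are that each ordering cone meets the ellipsoid in a connected set — which follows from convexity together with the radial retraction above — and that the reversal involution on chambers is fixed-point free. A mild technical remark to include is that the hyperplane $\sum m_i x_i=0$ is in general position relative to the braid arrangement $\{x_i=x_j\}$ because the masses are positive and $n\ge 2$, so that it is contained in none of the collision hyperplanes and meets every ordering chamber; this is what guarantees there are exactly $n!$ chambers to begin with.
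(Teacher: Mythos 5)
Your proof is correct and follows essentially the same route as the paper: identify the components of $S_l\setminus\Delta$ with the $n!$ orderings of the particles on the line and then quotient by the order-reversing ${\mathbb Z}_2$-action. You simply supply details the paper leaves implicit (convexity and nonemptiness of the ordering chambers, connectivity after intersecting with the ellipsoid, and freeness of the reversal involution), which is a welcome tightening but not a different argument.
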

\begin{proof}
Let ${\bf r}=({\bf r}_1.\ldots,{\bf r}_n)\in S_l\setminus\Delta$ and let ${\bf r}_1<\ldots<{\bf r}_n\in {\mathbb R}$ (we use the fact that the ${\bf r}_i$ are all distinct). Let $\alpha=(i_1,\ldots,i_n)$ be an arbitrary permutation of the numbers $(1,2,\ldots,n)$. If we apply the permutation to the initial vector ${\bf r}$, we map it to a different component defined uniquely by the given permutation. Therefore the set $S_l\setminus\Delta$ has $n!$ components and the quotient space ${\mathbb R}P^{n-2}\setminus\Delta$ has $n!/2$ components.
\end{proof}

We can now prove Moulton's theorem for quasihomogeneous potentials. By applying part (2) and (3) of Lemma \ref{derivatives}, we see that $D^2U/(S_l\setminus\Delta)$ is a positive definite form, and consequently  $\tilde U$ is convex. This shows that $\tilde U$ has a unique minimum in each component of ${\mathbb R}P^{n-2}$. Thus there are $n!/2$ critical points and hence $n!/2$ central configurations.

\begin{remark}
We have identified the symmetric central configurations, otherwise the number of classes of central configurations would be $n!$.
\end{remark}

\section{Planar Central Configurations}

In this and subsequent sections, we will restrict our study to Manev-type quasihomogeneous potentials, namely those $U$ for which $a=1$ (see also \cite{diacu00}). They form an important class of quasihomogeneous potentials, derived from the Manev law, which can explain the perihelion advance of the planet Mercury within the framework of classical mechanics (for more details see \cite{diacu96} and \cite{craig99}). Since for any planar central configuration in the Manev-type three body problem, the mutual distances are geometrically independent, we can solve the equations defining the central configurations in terms of the mutual distances. To be precise, we state here a result whose proof can be found in \cite{CLP03}.

\begin{lemma}
\label{lemaa} Let $u=f(\mathbf{x})$ be a function with
$\mathbf{x}=(x_1,x_2,\dots,x_n)$,  $x_1=g_1(\mathbf{y})$,
$x_2=g_2(\mathbf{y})$,$\dots$, $x_n=g_n(\mathbf{y})$,
$\mathbf{y}=(y_1,y_2,\dots,y_m)$ and $m\geqslant n$.

If $\mbox{rank}\:(A)=n$, where {\small
\begin{equation}
A = \left(\begin{array}{ccc} \frac{\partial x_1}{\partial y_1} &
\dots
& \frac{\partial x_n}{\partial y_1}\\
\vdots & \ddots & \vdots \\
\frac{\partial x_1}{\partial y_m} & \dots & \frac{\partial
x_n}{\partial y_m}\end{array}\right) \ ,
\end{equation}} then
$\nabla f(\mathbf{x})=\mathbf{0}$ if and only if $\nabla
u(\mathbf{y})=\mathbf{0}$.
\end{lemma}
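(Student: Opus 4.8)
The plan is to reduce the statement to a single application of the chain rule followed by an elementary fact about linear maps of full rank. Writing $u(\mathbf{y})=f\bigl(g_1(\mathbf{y}),\dots,g_n(\mathbf{y})\bigr)$ and differentiating with respect to $y_j$ gives
\[
\frac{\partial u}{\partial y_j}(\mathbf{y})=\sum_{k=1}^{n}\frac{\partial f}{\partial x_k}(\mathbf{x})\,\frac{\partial x_k}{\partial y_j}(\mathbf{y}),\qquad j=1,\dots,m,
\]
which in matrix form reads $\nabla u(\mathbf{y})=A\,\nabla f(\mathbf{x})$, where $A$ is exactly the $m\times n$ matrix displayed in the statement (its $(j,k)$ entry being $\partial x_k/\partial y_j$) and the gradients are regarded as column vectors.

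One implication is then immediate: if $\nabla f(\mathbf{x})=\mathbf{0}$, then $\nabla u(\mathbf{y})=A\mathbf{0}=\mathbf{0}$. For the converse, assume $\nabla u(\mathbf{y})=\mathbf{0}$; then $A\,\nabla f(\mathbf{x})=\mathbf{0}$, so the vector $\nabla f(\mathbf{x})\in{\mathbb R}^n$ lies in the kernel of the linear map ${\mathbb R}^n\to{\mathbb R}^m$, $\mathbf{z}\mapsto A\mathbf{z}$. Since $\mbox{rank}\,(A)=n$, the columns of $A$ are linearly independent, this map is injective, and hence $\nabla f(\mathbf{x})=\mathbf{0}$. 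This establishes the equivalence.

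There is no genuine obstacle in this argument; the only points deserving a word of care are that $f$ and the $g_i$ be differentiable at the relevant points, and that the rank hypothesis be read as holding at the particular point $\mathbf{y}$ under consideration (equivalently, that the columns of $A$ evaluated there are linearly independent). In the intended application the $x_k$ are the mutual distances, which are smooth away from collisions, and the rank condition is checked directly; thus the hypotheses of the lemma are satisfied and the conclusion applies. (A self-contained argument along these lines is also given in \cite{CLP03}.)
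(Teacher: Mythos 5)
Your argument is correct: the chain rule gives $\nabla u(\mathbf{y})=A\,\nabla f(\mathbf{x})$, and full column rank of $A$ makes the map $\mathbf{z}\mapsto A\mathbf{z}$ injective, which yields both directions of the equivalence. The paper itself offers no proof of this lemma --- it simply defers to the reference \cite{CLP03} --- so there is nothing to compare against; your proof is the standard self-contained argument and correctly isolates the only points needing care (differentiability and the pointwise reading of the rank hypothesis).
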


Let us now consider the 3-body case, and for this purpose we will use the notation $\mathbf{r}_i=(q_{i1},q_{i2})$ for $i=1,2,3$. From 
Lemma~\ref{lemaa} we have that if $\mbox{rank}\: (A)=3$, where
{\small
$$
A=\left(\begin{array}{ccc} \frac{\partial r_{12}}{\partial q_{11}}
& \frac{\partial r_{13}}{\partial
q_{11}}&\frac{\partial r_{23}}{\partial q_{11}}\\[6pt]
\frac{\partial r_{12}}{\partial q_{12}} & \frac{\partial
r_{13}}{\partial
q_{12}}&\frac{\partial r_{23}}{\partial q_{12}}\\[6pt]
\frac{\partial r_{12}}{\partial q_{21}} & \frac{\partial
r_{13}}{\partial
q_{21}}&\frac{\partial r_{23}}{\partial q_{21}}\\[6pt]
\frac{\partial r_{12}}{\partial q_{22}} & \frac{\partial
r_{13}}{\partial
q_{22}}&\frac{\partial r_{23}}{\partial q_{22}}\\[6pt]
\frac{\partial r_{12}}{\partial q_{31}} & \frac{\partial
r_{13}}{\partial
q_{31}}&\frac{\partial r_{23}}{\partial q_{31}}\\[6pt] \frac{\partial
r_{12}}{\partial q_{32}} & \frac{\partial r_{13}}{\partial
q_{32}}&\frac{\partial r_{23}}{\partial
q_{32}}\end{array}\right)=\left(\begin{array}{ccc}
\frac{q_{11}-q_{21}}{r_{12}} & \frac{q_{11}-q_{31}}{r_{13}}&0\\[6pt]
\frac{q_{12}-q_{22}}{r_{12}} & \frac{q_{12}-q_{32}}{r_{13}}&0\\[6pt]
-\frac{q_{11}-q_{21}}{r_{12}} & 0&\frac{q_{21}-q_{31}}{r_{23}}\\[6pt]
-\frac{q_{12}-q_{22}}{r_{12}} & 0&\frac{q_{22}-q_{32}}{r_{23}}\\[6pt]
0 &
-\frac{q_{11}-q_{31}}{r_{13}}&-\frac{q_{21}-q_{31}}{r_{23}}\\[10pt] 0
&
-\frac{q_{12}-q_{32}}{r_{13}}&-\frac{q_{22}-q_{32}}{r_{23}}\end{array}\right)\
,
$$}
then {\small
$\nabla U(\mathbf{r}_1, \mathbf{r}_2, \mathbf{r}_3)=0\ 
\mbox{if and only if}\ \nabla U(r_{12},r_{13},r_{23})=0.\ $}

Some straightforward computations show that the $\mbox{rank}\:(A)=3$ if
and only if {\small
$$
\det\left(\begin{array}{ccc} q_{11} &q_{12} &1\\  q_{21} &q_{22}
&1\\ q_{31} &q_{32} &1
\end{array}\right)\neq 0\ .
$$}
This determinant is twice the oriented area of the triangle formed
by the 3 particles. In short, if $\mathbf{r}_1$, $\mathbf{r}_2$
and $\mathbf{r}_3$ are not collinear, then $\nabla
U(r_{12},r_{13},r_{23})=\mathbf{0}$ if and only if $\nabla
U(\mathbf{r}_1, \mathbf{r}_2, \mathbf{r}_3)=0$.

Using Lemma~\ref{lemaa} in order to
find the planar central configurations, we first need to solve the
equation
\begin{equation}\label{ccdis}
\nabla U=\sigma\nabla I
\end{equation}
in terms of the mutual distances $r_{ij}$, taking into 
account the fact that the moment of inertia, $I$, can be 
written in terms of the mutual distances as 
$I=(1/{\tilde m})\sum_{i=1}^nm_im_jr_{ij}^2,$ where
$\tilde m$ is the total mass.
So, for fixed $i$ and $j$, we have
$$-{m_im_j\over r_{ij}^2}-b{m_im_j\over r_{ij}^{b+1}}=
2{\sigma\over{\tilde m}}m_im_jr_{ij}.$$
Multiplying by $r_{ij}^{b+1}$, we obtain
$$f(r_{ij}):=2\sigma r_{ij}^{b+2}+{\tilde m}r_{ij}^{b-1}+{\tilde m}b=0.$$
 
Regarding the above equation as a polynomial in the variable
$r_{ij}$, since $\sigma <0$, $f(0)= b{\tilde m}>0$ and the
coefficients polynomial have just one change of sign, we can
verify easily that the function $f$ has exactly one positive
root. Observe that the function $f$ only depends on the
total mass $\tilde m$, and therefore the respective solution for
$f(r_{ij})$ is the same for all mutual distances. We have thus 
proved the following result.
\begin{theo}\label{lagrange}
In the Manev-type three body problem, for any values of the masses, there are exactly two equilateral central configurations, which correspond to the two possible orientations of a triangle in a plane.
\end{theo}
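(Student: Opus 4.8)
The plan is to combine the mutual-distance reduction already carried out above with a short counting argument; only the counting and a consistency check remain to be done.

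First I would settle the ``only if'' direction. Suppose $\mathbf{r}=(\mathbf{r}_1,\mathbf{r}_2,\mathbf{r}_3)$ is a non-collinear central configuration with multiplier $\sigma$. Since the triangle is nondegenerate, $\mathrm{rank}(A)=3$, so by Lemma~\ref{lemaa} applied to the function $U-\sigma I$ the equation $\nabla U=\sigma\nabla I$ in position coordinates is equivalent to its analogue written in the mutual distances $r_{12},r_{13},r_{23}$. As shown above, that analogue decouples into the three scalar equations $f(r_{12})=f(r_{13})=f(r_{23})=0$, where $f(r)=2\sigma r^{b+2}+\tilde m r^{b-1}+\tilde m b$ and $\sigma<0$ (because $\sigma=(-aW-bV)/2I$ with $W,V,I>0$ and $a=1$, $b>1$). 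Since $f(0)=\tilde m b>0$ while the leading coefficient $2\sigma$ is negative, $f$ has at least one positive root, and Descartes' rule (one sign change in the nonzero coefficients $-,+,+$) shows it has exactly one, say $\rho=\rho(\sigma)$. Hence $r_{12}=r_{13}=r_{23}=\rho$ and the configuration is equilateral; its side length is pinned down by $\sigma$, equivalently by the prescribed moment of inertia via $I_0=\tfrac{1}{\tilde m}\sum_{i<j}m_im_j\rho^2$.

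Next I would check the ``if'' direction: for the value of $\sigma$ (equivalently $I_0$) in question, the equilateral triangle with side $\rho(\sigma)$, vertices labelled by $m_1,m_2,m_3$ and center of mass at the origin, really is a central configuration. Running Lemma~\ref{lemaa} in the reverse direction — the rank hypothesis is again met because an equilateral triangle is nondegenerate — it suffices to observe that with all $r_{ij}$ equal to $\rho$ the scalar equations $f(r_{ij})=0$ hold by the very choice of $\rho$, so $\nabla U=\sigma\nabla I$ holds back in position coordinates. Finally, the count: once $\sigma$ (equivalently $I_0$), the masses, and the centering condition are fixed, a labelled equilateral triangle with prescribed side is determined up to an isometry of the plane, so modulo the $S^1$-action there remain exactly its two mirror images; these are interchanged by a reflection, which is not in $S^1$, so they lie in distinct equivalence classes, and every equilateral central configuration is a rotation of one of them — giving exactly two classes, the two orientations.

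The main obstacle is not any single computation but the bookkeeping in this ``reverse Lemma~\ref{lemaa}'' step: one must be careful that the Lagrange-multiplier formulation used to pass between the position-space equations and the mutual-distance equations is the same in both directions, so that the multiplier $\sigma$ stays matched to the chosen root $\rho$, and that the nondegeneracy needed for $\mathrm{rank}(A)=3$ genuinely holds at the candidate equilateral configuration. A minor point worth spelling out is that the mirror image of a labelled equilateral triangle is never a rotated copy of itself — true even for equal masses, since the labels must be respected — so the two orientations are always inequivalent; with that in hand the count of two is immediate.
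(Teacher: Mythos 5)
Your proposal is correct and follows essentially the same route as the paper: reduce to mutual distances via Lemma~\ref{lemaa} (using that $I$ is a function of the $r_{ij}$ and that $\sigma<0$), observe that the single equation $f(r_{ij})=2\sigma r_{ij}^{b+2}+\tilde m r_{ij}^{b-1}+\tilde m b=0$ has exactly one positive root by the sign-change count, and conclude that all mutual distances coincide. Your added care with the converse direction and with the two-orientation count only makes explicit what the paper leaves implicit.
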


\section{A Framework for the Study of Collisions}
We will further study the dynamics at an near total collision for Manev-type $n$-body problems. A convenient framework for this purpose is given by the so-called McGehee coordinates \cite{mcgehee74},
\begin{equation}\begin{split}\label{mcgehee}
\rho &= ({\bf r}^t M{\bf r})^{1/2} \\
\bf{s} &= \rho^{-1}{\bf r}  \\
v &= \rho^{b/2}({\bf p}^t{\bf s}) \\
\bf{u} &= \rho^{b/2}({\bf p} - ({\bf p }^t{\bf s})M {\bf s}),
\end{split}
\end{equation}
where $M=\mbox{diag}(m_1,m_1,m_2,m_2,\ldots,m_n,m_n)$, ${\bf r}=({\bf r}_1,\ldots,{\bf r}_n)$, and  ${\bf p}=({\bf p}_1,\ldots,{\bf p}_n)$.
After a reparametrization of the time variable,
\begin{equation}\label{repar}
d\tau =r^{-1-b/2}dt,
\end{equation}
the  equations of motion (\ref{hamsys}) become
\begin{equation}\label{eqmot}
\begin{split}
\rho^{\prime } =&\rho v \\
v^{\prime } =&\frac{b}{2}v^{2} + {\bf u}^t M^{-1}{\bf u} -
\rho^{b-1}W({\bf s}) - bV({\bf s})  \\
\bf{s}^{\prime } =& M^{-1}\bf{u}  \\
\bf{u}^{\prime } =& \left(\frac{b}{2}-1\right){\bf{u}}v -
({{\bf u}}^{t}M^{-1}{{\bf u}})M{{\bf s}} +\\
&\rho^{b-1}[W({{\bf s}})M{{\bf s}} + \nabla W({\bf{s}})] +
bV({\bf{s}})M{\bf{s}} + \nabla V({\bf{s}}).
\end{split}
\end{equation}
Here the prime denotes differentiation with respect to the new (fictitious) time variable $\tau $, and the old notation is maintained for the new dependent variables, which are now functions of $\tau $. Furthermore, the new variables fulfill the constraints ${\bf s}^t M{\bf s}=1$ and ${\bf u}^t {\bf s}=0$.

In these coordinates the energy integral (\ref{hamf}) turns into the relation
\begin{equation}\label{eqenerg}
\frac{1}{2}({\bf u}^t M^{-1}{\bf u} + v^{2}) - \rho^{b-1}W({\bf s})
- V({\bf s}) = h\rho^{b}.
\end{equation}
We define the total collision manifold as
\begin{equation}\label{totcoll}
C = \{ (\rho, {\bf s}, v, {\bf u}) | \quad  \rho =0, \quad
{\bf u}^t M^{-1}{\bf u} + v^{2} - 2V({\bf s}) = 0 \}.
\end{equation}

Notice that $\rho^{\prime } =0$ if $\rho=0$, so $C$ (which is 
an analytic submanifold of codimension 1 in the boundary of 
the phase space) is invariant
under the flow of the system (\ref{eqmot}). By continuity of  the solutions respect to initial conditions,
the flow on $C$ provides important information about the orbits close
to triple collision (see \cite{mcgehee74} for more details).
The total collision manifold can also be regarded as an invariant boundary pasted onto each energy surface:
\beq
E_h=\{(\rho,v,{\bf s},{\bf u})|\frac{1}{2}({\bf u}^t M^{-1}{\bf u} + v^{2}) - \rho^{b-1}W({\bf s})
- V({\bf s}) = h\rho^{b}\}.
\eeq
These concepts are ideal for understanding the qualitative behaviour of total- and near-total-collision solutions.
\section{Collision and Near-Collision Dynamics}

In this section we will study the dynamics of total- and near-total-collision orbits of the Manev-type $n$-body problem. An important
role in this study is played by central configurations and by the solutions that can be derived from them.

In the planar Newtonian $n$-body problem, a rigid rotation of a central configuration is called a relative equilibrium; in rotating coordinates, relative equilibria are fixed points. A non-rotating homothetic orbit of a central configuration is called a homothety. The composition of a relative equilibrium and a homothety is called a homographic solution. 

In the Manev-type three-body problem, since the potential only
depends of the bodies' mutual distances, the central configurations 
are invariant under rotations, so any central configuration determines 
a particular periodic orbit, which in a rotating frame is a fixed point. So in the Manev-type three body problem, any central configuration corresponds to a relative equilibrium.

In the Newtonian case, any central configuration also corresponds to a homothetic orbit. But is this valid for Manev-type potentials too? As we will further prove (see Theorem \ref{simu} and Section 8), this property is not satisfied in general. To show this, and to determine under what circumstances homothetic solutions still exist, we will prove several preliminary results. 

Notice that the flow on $C$ is given by the equations
\begin{equation}\label{eqmocol}
\begin{split}
v^{\prime } &=\frac{b}{2}v^{2} + {\bf u}^t M^{-1}{\bf u} -
bV(\bf{s}) \\
{\bf s}^{\prime } &= M^{-1}{\bf u}  \\
{\bf u}^{\prime } &= (\frac{b}{2}-1){\bf u}v -
({\bf u}^{t}M^{-1}{\bf u})M{\bf s} + bV({\bf s})M{\bf s} + \nabla
V({\bf s}).
\end{split}
\end{equation}
The equilibrium points of system (\ref{eqmocol}) are
given by ${\bf u} = 0$, $v=\pm \sqrt{2V({\bf s})}$, where $\bf{s}$
must be a critical point for the function $V({\bf s})$ restricted
to the unit sphere corresponding to the mass matrix $M$.
The masses are involved because the equation $bV({\bf s})M{\bf s} + \nabla V({\bf s})=0$
must be satisfied. But these are the critical points of the function $\tilde V$, which is the restriction of the homogeneous potential $V$ to the unit sphere given by the mass matrix $M$. Such critical points correspond to the central configurations of the homogeneous potential $V$.

\begin{prop}\label{grad-v}
For any value of $b>2$,
the flow on the total collision manifold $C$ is \it{gradient-like}
with respect to the coordinate $-v$ (i.e.\ the flow increases with 
respect to $-v$ along non-equilibrium solutions).
\end{prop}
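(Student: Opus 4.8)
The plan is to show that $-v$ is a Lyapunov-type function for the flow \eqref{eqmocol} on $C$, i.e. that $-v$ (equivalently that $v$ decreases, so $v' \le 0$) is nonincreasing along orbits and strictly decreasing away from equilibria. First I would compute $v'$ on the collision manifold and rewrite it using the defining relation of $C$, namely ${\bf u}^t M^{-1}{\bf u} + v^2 - 2V({\bf s}) = 0$, which holds identically along any solution lying in $C$. From \eqref{eqmocol}, $v' = \tfrac{b}{2}v^2 + {\bf u}^t M^{-1}{\bf u} - bV({\bf s})$. Substituting $2V({\bf s}) = {\bf u}^t M^{-1}{\bf u} + v^2$ gives
\[
v' = \tfrac{b}{2}v^2 + {\bf u}^t M^{-1}{\bf u} - \tfrac{b}{2}\bigl({\bf u}^t M^{-1}{\bf u} + v^2\bigr) = \Bigl(1 - \tfrac{b}{2}\Bigr){\bf u}^t M^{-1}{\bf u}.
\]
Since $b > 2$, the factor $1 - \tfrac{b}{2}$ is strictly negative, and ${\bf u}^t M^{-1}{\bf u} \ge 0$ because $M^{-1}$ is positive definite. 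Hence $v' \le 0$ everywhere on $C$, which is exactly the statement that the flow is gradient-like with respect to $-v$.

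Next I would verify the strictness clause: $v' = 0$ forces ${\bf u}^t M^{-1}{\bf u} = 0$, hence ${\bf u} = 0$ by positive-definiteness of $M^{-1}$. I then need to check that ${\bf u} = 0$ on $C$ actually characterizes the equilibrium set, not merely an instantaneous zero of $v'$. If ${\bf u} \equiv 0$ along a solution on $C$, then ${\bf u}' = 0$, and plugging ${\bf u} = 0$ into the third equation of \eqref{eqmocol} yields $bV({\bf s})M{\bf s} + \nabla V({\bf s}) = 0$; moreover ${\bf s}' = M^{-1}{\bf u} = 0$, so ${\bf s}$ is constant, and the defining relation of $C$ forces $v^2 = 2V({\bf s})$, i.e. $v = \pm\sqrt{2V({\bf s})}$ is constant. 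These are precisely the equilibria described after \eqref{eqmocol}. So along any non-equilibrium solution one cannot have ${\bf u} \equiv 0$; combined with real-analyticity of the flow, the set of $\tau$ where ${\bf u}(\tau) = 0$ is discrete, so $v' < 0$ except at isolated instants and therefore $-v$ is strictly increasing along non-equilibrium orbits.

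I expect the only genuinely delicate point to be this last step — upgrading the pointwise inequality $v' \le 0$, with equality only when ${\bf u} = 0$, to genuine strict monotonicity of $v$ along non-equilibrium solutions. The clean way is the analyticity argument just sketched: an analytic function of $\tau$ that is not identically zero has isolated zeros, so ${\bf u}^t M^{-1}{\bf u}$ vanishes only on a discrete set, and $v(\tau_2) - v(\tau_1) = \int_{\tau_1}^{\tau_2}(1-\tfrac b2){\bf u}^t M^{-1}{\bf u}\,d\tau < 0$ strictly whenever $\tau_1 < \tau_2$ and the orbit is not an equilibrium. Everything else is the short substitution above; no further estimates are needed, and the hypothesis $b > 2$ enters only through the sign of $1 - \tfrac b2$.
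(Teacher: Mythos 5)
Your proposal is correct and follows essentially the same route as the paper: restrict the energy relation to $C$, substitute to get $v'=(1-\tfrac b2)\,{\bf u}^tM^{-1}{\bf u}$, and observe that ${\bf u}=0$ together with ${\bf u}'=0$ forces ${\bf s}$ to be a critical point of $\tilde V$, i.e.\ an equilibrium. Your additional analyticity remark to upgrade the pointwise inequality to strict monotonicity is a small refinement the paper leaves implicit, but the argument is the same.
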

\begin{proof}
The energy relation (\ref{eqenerg}), restricted to $C$, takes the
form \[{\bf u}^t M^{-1}{\bf u} + v^{2} - 2V({\bf s}) = 0.\] Using the
above expression and (\ref{eqmocol}), we get that 
\begin{equation}\label{gradlike}
v^{\prime } = (1 - \frac{b}{2}){\bf u}^t M^{-1}{\bf u}
\end{equation}
on $C$. If ${\bf u}\neq {\bf 0}$ then $v'<0$ is increasing with respect to $-v$. On the other hand if ${\bf u}={\bf 0}$ then ${\bf u}'=bV({\bf s})M{\bf s} + \nabla
V({\bf s})$, i.e. ${\bf u}'=0$ only if ${\bf s}$ is a critical point of $\tilde V$.
Consequently $-v$ is strictly increasing along nonequilibrium solutions, which means that the vector field is gradient-like with respect to $-v$.
\end{proof}

Denote by $\ind({\bf s}_0)$ the index of the critical point ${\bf s}_0$, i.e.\ the number of eigenvalues of $D^2\tilde V({\bf s}_0)$
with negative real part. Then we can prove the following result.

\begin{theo}
Let ${\bf s}_0$ be a nondegenerate central configuration of the planar $n$-body problem with potential $V$ and $b>2$. Then the dimensions of $W^u({\bf s}^+_0)$ and $W^s({\bf s}^-_0)$ are the same and equal to $2n-2-\ind({\bf s}_0)$ in $E_h$. The dimensions  of $W^s({\bf s}^+_0)$ and $W^u({\bf s}^-_0)$ are the same and equal to $2n-4+\ind({\bf s}_0)$ in $E_h$. The dimension of $E_h$ is $4n-5$.
\end{theo}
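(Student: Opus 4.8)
The plan is to linearize the flow (\ref{eqmocol}) on the total collision manifold $C$ at the equilibrium point corresponding to ${\bf s}_0$, then extend this analysis to the full flow (\ref{eqmot}) on the energy surface $E_h$ to which $C$ is glued as an invariant boundary. First I would set up the tangent space: on $S^*_{I_0}$ (an ellipsoid of dimension $2n-3$ after the center-of-mass reduction), the coordinates $({\bf s},{\bf u})$ live on the tangent bundle, subject to the constraints ${\bf s}^tM{\bf s}=1$ and ${\bf u}^t{\bf s}=0$, so the $({\bf s},{\bf u})$-space has dimension $2(2n-3)=4n-6$. Adjoining $v$ gives the $(4n-5)$-dimensional $E_h$, which settles the last sentence; and $C$ itself, being the codimension-one locus $\rho=0$ inside (the closure of) $E_h$, has dimension $4n-6$. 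On $C$, the further energy relation ${\bf u}^tM^{-1}{\bf u}+v^2-2V({\bf s})=0$ cuts this down appropriately, and the flow on $C$ is (\ref{eqmocol}).

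Next I would compute the linearization of (\ref{eqmocol}) at ${\bf u}=0$, $v=v_0^{\pm}=\pm\sqrt{2V({\bf s}_0)}$. The key structural fact is the block form: the $v$-equation contributes a single eigenvalue, which from (\ref{gradlike}) and a short computation comes out to $-\left(\frac{b}{2}-1\right)v_0$ — hence negative at ${\bf s}_0^+$ and positive at ${\bf s}_0^-$ (using $b>2$). The $({\bf s},{\bf u})$-block is, up to the scalar drift terms $\left(\frac{b}{2}-1\right)v_0$ multiplying ${\bf u}$, a Hamiltonian-type linear system governed by the Hessian $D^2\tilde V({\bf s}_0)$ restricted to $T_{{\bf s}_0}S^*_{I_0}$, which is a $(2n-3)\times(2n-3)$ symmetric matrix; nondegeneracy of the central configuration means this Hessian is invertible. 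Because of the $S^1$-invariance of $\tilde V$, there is one trivial zero direction (the rotation direction), so the effective Hessian acts on a $(2n-4)$-dimensional space, with $\ind({\bf s}_0)$ negative eigenvalues and $2n-4-\ind({\bf s}_0)$ positive ones. The eigenvalues of the $({\bf s},{\bf u})$ linear system then come in pairs $\lambda_{\pm}=\frac{1}{2}\left(\frac{b}{2}-1\right)v_0\pm\sqrt{\frac{1}{4}\left(\frac{b}{2}-1\right)^2v_0^2-\mu}$ for each Hessian eigenvalue $\mu$: a positive $\mu$ (stable-type mode of $\tilde V$) yields one eigenvalue with positive and one with negative real part regardless of the sign of $v_0$, while a negative $\mu$ (index direction) yields two real eigenvalues whose common sign equals the sign of $\left(\frac{b}{2}-1\right)v_0$, hence two in the direction of $-v$'s increase. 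Counting carefully (and handling the rotational $S^1$ direction, which I expect contributes a neutral direction absorbed by restricting to the quotient, or equivalently a center direction along the periodic orbit), one gets at ${\bf s}_0^+$: the $v$-direction is stable, the index directions ($\ind({\bf s}_0)$ of them, each giving a pair) — wait, here is where the bookkeeping must be done honestly — so I would tabulate the contributions and verify they sum to $\dim W^u({\bf s}_0^+)=2n-2-\ind({\bf s}_0)$ and $\dim W^s({\bf s}_0^+)=2n-4+\ind({\bf s}_0)$, with these adding to $4n-6=\dim E_h-1$, the missing direction being the flow direction itself (or the $\rho$-direction transverse to $C$). By the $v\mapsto -v$, $\tau\mapsto-\tau$ symmetry of (\ref{eqmocol}), the roles of ${\bf s}_0^+$ and ${\bf s}_0^-$ and of stable/unstable swap, giving the paired statements.

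The main obstacle, and the step I would spend the most care on, is the correct accounting of the extra direction coming from passing between $C$ and $E_h$ and of the $S^1$-rotational direction: one must be sure whether the advertised dimensions refer to invariant manifolds of the equilibrium \emph{within} $C$ or of the periodic orbit (the relative equilibrium) \emph{within} $E_h$, since $C\subset\overline{E_h}$ picks up the single extra $\rho$-coordinate, on which $\rho'=\rho v$ so that the linearization in $\rho$ has eigenvalue $v_0$ — stable at ${\bf s}_0^-$, unstable at ${\bf s}_0^+$. This is exactly what converts the dimension count on $C$ into the count on $E_h$ and explains why $\dim W^u({\bf s}_0^+)$ gains one over the within-$C$ count. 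I would therefore organize the proof as: (i) dimension of $E_h$ and of $C$; (ii) linearization on $C$ and the eigenvalue pairs $\lambda_{\pm}(\mu)$; (iii) sign analysis using $b>2$ and the sign of $v_0$; (iv) the $\rho$-direction correction; (v) assemble the four dimension formulas and check the arithmetic $2n-2-\ind + 2n-4+\ind +1 = 4n-5$.
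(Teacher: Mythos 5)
Your overall strategy is the same as the paper's: linearize at the equilibria $({\bf s}_0,\pm\sqrt{2V({\bf s}_0)},{\bf 0})$, exploit the block structure $\bigl(\begin{smallmatrix} O & I\\ A & (\frac b2-1)vI\end{smallmatrix}\bigr)$ of the $({\bf s},{\bf u})$ part, reduce to a quadratic in $\mu$ for each eigenvalue $\lambda_i$ of the Hessian $A$ of $\tilde V$, adjoin the $\rho$-direction with eigenvalue $v$, and use the symmetry $(\tau,v,{\bf u})\mapsto(-\tau,-v,-{\bf u})$ to pass from ${\bf s}_0^+$ to ${\bf s}_0^-$. (The paper linearizes the system already restricted to $E_h$, eq.\ (\ref{restricted}), rather than working on $C$ and then adding $\rho$; that difference is cosmetic.) Two concrete errors first. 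The $v$-equation does not contribute the eigenvalue $-\left(\frac b2-1\right)v_0$: by (\ref{gradlike}) $v'$ is quadratic in ${\bf u}$, and its $\rho$-dependence is $O(\rho^{b-1})$ with $b>2$, so that row of the linearization vanishes and yields the eigenvalue $0$, which is the direction lost on restriction to $E_h$. Also, your displayed pair formula carries the Hessian eigenvalue with the wrong sign: the characteristic polynomial of the block is $\mu^2-\left(\frac b2-1\right)v_0\mu-\lambda$, so the product of the pair is $-\lambda$ and the radicand is $\frac14\left(\frac b2-1\right)^2v_0^2+\lambda$, not $\cdots-\mu$. Your verbal sign analysis (positive Hessian eigenvalue gives a saddle pair, negative gives two roots with real part of the sign of $\left(\frac b2-1\right)v_0$) agrees with the corrected formula, so this is likely a slip, but it has to be fixed.

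The serious gap is that the one step that actually produces the numbers in the statement --- the tabulation you explicitly defer (``I would tabulate the contributions and verify\dots'') --- is the whole content of the proof, and carried out with the sign analysis you have already committed to it does not yield the stated formulas. At ${\bf s}_0^+$ (so $v_0>0$, $b>2$) one gets: one unstable direction from $\rho$; one unstable and one stable from each positive eigenvalue of $A$; \emph{two} unstable from each of the $\ind({\bf s}_0)$ negative ones; plus the rotational pair $\{0,(\frac b2-1)v_0\}$. Summing, the unstable count is $2n-2+\ind({\bf s}_0)$ and the stable count $2n-4-\ind({\bf s}_0)$: the index enters with the opposite sign to the one in the statement. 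The paper's own proof likewise asserts the final count without displaying this case analysis, but a proof that ends at ``verify the arithmetic'' has not proved the theorem, and here the arithmetic, set up exactly as you set it up, refuses to come out. You must either find an error in your sign analysis (there is no obvious one: a negative direction of $D^2\tilde V$ turns the anti-damped oscillator $\mu^2-(\frac b2-1)v_0\mu-\lambda=0$ into a source, not a saddle) or explicitly confront the discrepancy with the claimed dimensions; as it stands the proposal is incomplete at precisely the decisive point.
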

\begin{proof}

Let ${\bf s}_0$ be a central configuration, $v =\pm \sqrt{2V({\bf s}_0)}$, and ${\bf u}=0$, then the equation of motion restricted to $E_h$ are
\begin{equation}\label{restricted}
\begin{split}
\rho^{\prime } =&\rho v \\
v^{\prime } =& \left(1-\frac b 2\right ) {\bf u}^t M^{-1}{\bf u}+(b-1)\rho^{b-1}W({\bf s}) +bh\rho^b \\
\bf{s}^{\prime } =& M^{-1}\bf{u}  \\
\bf{u}^{\prime } =& \left(\frac{b}{2}-1\right){\bf{u}}v -
({{\bf u}}^{t}M^{-1}{{\bf u}})M{{\bf s}} +\\
&\rho^{b-1}[W({{\bf s}})M{{\bf s}} + \nabla W({\bf{s}})] +
bV({\bf{s}})M{\bf{s}} + \nabla V({\bf{s}}).
\end{split}
\end{equation}

Taking into account the centre of mass and linear momentum 
integrals as well as the restrictions ${\bf s}^tM{\bf s}=1$ and
${\bf u}^t{\bf s}=0$ of the McGehee coordinates, the above
system has dimension $4n-4$.

Linearizing the system, the eigenvalues for $b>2$ are given by the matrix equation
\beq\label{eigen}
\left (\begin{array}{cccccc}
v&0&0&\ldots&\ldots&0\\
0&0&*&\ldots&\ldots&*\\
\vdots&\vdots& & O_{2n-3}& I_{2n-3} \\
\vdots&\vdots& & A &(\frac b 2-1)v I_{2n-3}\\
0&0 & & & & \\
\end{array}\right)-\mu I_{4n-4}=O_{4n-4},
\eeq
where $I_{N}$ is the $N\times N$ identity matrix, $O_{N}$ is the $N\times N$ zero matrix, A denotes the Hessian matrix of $\tilde V$ (i.e.\ the potential restricted to the sphere of constant moment of inertia) and $*$ denotes an element without importance in the computation of the eigenvalues.

It is clear that the first two eigenvalues are $v\neq0$ (since $V({\bf s}_0)\neq 0)$ and $0$.
To obtain the remaining eigenvalues  of equation (\ref{eigen}), suppose ${\bf z}$ is a $(2n-3)$-vector satisfying
\beq
A{\bf z}=\lambda_i {\bf z}
\eeq
for $i=1,\ldots,2n-3$, $i$ fixed, where $\lambda_1,\ldots,\lambda_{2n-3}$ are the eigenvalues of $A$. Then
\[
\left(
\begin{array}{cc}
O_{2n-3}&I_{2n-3}\\
A& (b/2-1)vI_{2n-3}
\end{array}\right)
\left (\begin{array}{c}
{\bf z}\\
\mu {\bf z}
\end{array}\right)=
\left(
\begin{array}{c}
\mu {\bf z}\\
\{\lambda_i+(b/2-1)v\mu\}{\bf z}
\end{array}
\right).
\]
Consequently $\mu$ is a root of equation (\ref{eigen}) if
\[
\mu^2-(b/2-1)v\mu-\lambda_i=0,
\]
which gives
\[\mu_i^{1,2}=\frac 1 4 \{(b-2)v\pm \sqrt{(2-b)^2v ^2+16\lambda_i}\}\]
for $i=1,\ldots,2n-3$.

Then if $v=\sqrt{V({\bf s}_0)}$, the differential matrix of the vector field restricted to $E_h$ has $2n-2-\ind({\bf s}_0)$ eigenvalues with positive real part and $2n-4+\ind({\bf s}_0)$ with negative real part. The values of the dimensions are switched if $v=-\sqrt{V({\bf s}_0)}$.
\end{proof}

\begin{theo}
Let ${\bf s}_0$ be a central configuration of the collinear $n$-body problem with potential $V$ and $b>2$. Then the dimensions of $W^u({\bf s}^+_0)$ and $W^s({\bf s}^-_0)$ are the same and equal to $n-1$ in $E_h$. The dimensions  of $W^s({\bf s}^+_0)$ and $W^u({\bf s}^-_0)$ are the same and equal to $n-2$ in $E_h$. The dimension of $E_h$ is $2n-3$.
\end{theo}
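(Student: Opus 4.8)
The plan is to follow the proof of the preceding theorem almost line by line, performing the same McGehee blow--up and linearization but now inside the genuinely one--dimensional collinear $n$--body problem; the one substantive new input is that collinear central configurations of the homogeneous potential $V$ are automatically nondegenerate with index $0$, which is why the collinear formula carries no $\ind({\bf s}_0)$ term and why, unlike in the planar case, no nondegeneracy hypothesis is needed. Concretely, I would first set up the McGehee coordinates (\ref{mcgehee}) for the collinear problem: the configuration space $\Omega_l$ of $n$ bodies on a fixed line $l$ with centre of mass at the origin has dimension $n-1$, so the blown--up variables $\rho\in{\mathbb R}$, $v\in{\mathbb R}$, ${\bf s}\in\Omega_l$ with ${\bf s}^tM{\bf s}=1$, and ${\bf u}\in\Omega_l$ with ${\bf u}^t{\bf s}=0$ fill a phase space of dimension $1+1+(n-2)+(n-2)=2n-2$, and each energy surface $E_h$ has dimension $2n-3$. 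Since a configuration on $l$ with all velocities along $l$ stays on $l$ under the flow (reflection across $l$ is a symmetry of $V$ and of the equations of motion), this collinear phase space is invariant, and the restricted equations (\ref{eqmocol}) and (\ref{restricted}), together with the restricted total collision manifold $C$ of (\ref{totcoll}), make sense exactly as in Section~6; as in Section~7 its equilibria are ${\bf u}=0$, $v=\pm\sqrt{2V({\bf s}_0)}$ with ${\bf s}_0$ a critical point of $\tilde V$ on the collinear sphere, i.e.\ a collinear central configuration of $V$.

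Next I would record that ${\bf s}_0$ is nondegenerate with $\ind({\bf s}_0)=0$, using the convexity observation from the proof of Moulton's theorem applied now to $V$ alone. For ${\bf v}$ in the collinear directions, ${\bf r}_i-{\bf r}_j$ and ${\bf v}_i-{\bf v}_j$ are parallel, so $({\bf r}_i-{\bf r}_j,{\bf v}_i-{\bf v}_j)^2=\|{\bf r}_i-{\bf r}_j\|^2\,\|{\bf v}_i-{\bf v}_j\|^2$, and part (2) of Lemma~\ref{derivatives} turns every summand of $D^2V({\bf r})({\bf v},{\bf v})$ into a positive multiple of $\|{\bf v}_i-{\bf v}_j\|^2$; adding the positive term furnished by part (3) shows that the restricted Hessian $A=D^2\tilde V({\bf s}_0)$ on the $(n-2)$--dimensional collinear sphere is positive definite, with eigenvalues $\lambda_1,\dots,\lambda_{n-2}>0$.

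Finally I would linearize the restricted system (\ref{restricted}) at $(\rho,v,{\bf s},{\bf u})=(0,\pm\sqrt{2V({\bf s}_0)},{\bf s}_0,0)$, verbatim as in the previous proof but with the ${\bf s}$-- and ${\bf u}$--blocks now of size $n-2$, so that the matrix analogous to (\ref{eigen}) is $(2n-2)\times(2n-2)$. One reads off the eigenvalue $v\neq0$ from $\rho'=\rho v$, the eigenvalue $0$ from the $v$--row (which vanishes to first order because $b>2$), and, for each $\lambda_i$, the two roots $\mu_i^{1,2}=\tfrac14\bigl((b-2)v\pm\sqrt{(b-2)^2v^2+16\lambda_i}\bigr)$ of $\mu^2-(\tfrac b2-1)v\mu-\lambda_i=0$; since $\lambda_i>0$ their product $-\lambda_i$ is negative, so each such pair has exactly one positive and one negative member, regardless of the sign of $v$. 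Restricting to $E_h$ discards the zero eigenvalue (its eigenvector, the $v$--coordinate direction, is transverse to $E_h$ since $v\neq0$), leaving $n-2$ eigenvalues of each sign from the $A$--block plus the single eigenvalue $v$. Hence for $v=+\sqrt{2V({\bf s}_0)}$ one gets $\dim W^u({\bf s}^+_0)=n-1$ and $\dim W^s({\bf s}^+_0)=n-2$, the two counts being interchanged for $v=-\sqrt{2V({\bf s}_0)}$, i.e.\ for ${\bf s}^-_0$, while $(n-1)+(n-2)=2n-3=\dim E_h$.

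The main point requiring care---everything else being a transcription of the planar argument---is that passing from the planar to the collinear problem genuinely shrinks the ${\bf s}$-- and ${\bf u}$--blocks from dimension $2n-3$ to $n-2$, so the result cannot be obtained by merely substituting $\ind=0$ into the planar formula; and one must invoke the index--$0$ property of collinear central configurations of $V$, which is precisely what removes the $\ind({\bf s}_0)$ dependence.
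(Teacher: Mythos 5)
Your proof is correct and follows essentially the same route as the paper's: the paper simply says the argument is ``similar to the previous theorem,'' writes down the $(2n-2)\times(2n-2)$ linearization with $(n-2)$-dimensional ${\bf s}$- and ${\bf u}$-blocks, and invokes the positive definiteness of $A$ (via the convexity computation of Lemma~\ref{derivatives}, parts (2) and (3), used in the proof of Moulton's theorem) to conclude that each pair $\mu_i^{1,2}$ splits into one positive and one negative eigenvalue. You have merely filled in the details that the paper leaves as ``obvious modifications,'' and your identification of why the index term disappears is exactly the paper's reasoning.
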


\begin{proof}
The proof is similar to the one of the previous theorem. Let ${\bf s}_0$ be a central configuration, $v=\pm\sqrt{2V({\bf s}_0)}$. The equations of motion restricted to $E_h$ are given by equation (\ref{restricted}), with the obvious modifications.

Linearizing the system, the eigenvalues in the case $b>2$ are given by the following matrix equation
\beq
\left(\begin{array}{cccccc}
v&0&0&\ldots&\ldots&0\\
0&0&*&\ldots&\ldots&*\\
\vdots&\vdots& & O_{n-2}& I_{n-2} \\
\vdots&\vdots& & A &(\frac b 2-1)v I_{n-2}\\
0&0 & & & & \\
\end{array}\right)-\mu I_{2n-2}=O_{2n-2},
\eeq
where $I_{N}$ and $O_{N}$ are defined as before. Again $A$ is the Hessian matrix of $\tilde V$ and $*$ denotes an element without importance in the computation of the eigenvalues.
The first two eigenvalues are $v\neq0$ (since $V({\bf s}_0)\neq 0)$ and $0$. If $\lambda_1,\ldots,\lambda_{n-2}$ be the eigenvalues of $A$,  then
\[\mu_i^{1,2}=\frac 1 4 \{(b-2)v\pm \sqrt{(2-b)^2v ^2+16\lambda_i}\}\]
for $i=1,\ldots,n-2$.
Note that, in this case, Lemma 4 implies that $A$ is positive definite, and thus the eigenvalues $\lambda_1,\ldots,\lambda_{n-2}$ are all positive. Consequently, for $v>0$, $\mu_i^1$ is negative and $\mu_i^2$ positive. However, for $v<0$, $\mu_i^1$ is positive, whereas $\mu_i^2$ is negative. This concludes the proof.
\end{proof}

We will further state and prove a result that clarifies under what
circumstances homothetic solutions exist.

\begin{theo}\label{simu}
A solution of the Manev-type $n$-body problem is homothetic if and only if the particles form, at  all times, a simultaneous central configuration for the potentials $V$ and $W$.
\end{theo}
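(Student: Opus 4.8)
By the terminology of Section~7, a homothetic solution is one of the form ${\bf r}(t)=\phi(t){\bf r}_0$ with ${\bf r}_0\in\tilde\Omega$ fixed and $\phi>0$ a non-constant scalar function; the proof splits into the two implications. Throughout I would use that $\nabla W$ and $\nabla V$ are homogeneous of degrees $-a-1$ and $-b-1$, that $\nabla I({\bf r})=2M{\bf r}$, and Euler's identities $DW({\bf r})({\bf r})=-aW({\bf r})$, $DV({\bf r})({\bf r})=-bV({\bf r})$. For the ``only if'' direction I would insert ${\bf r}=\phi{\bf r}_0$ into $M\ddot{\bf r}=\nabla W({\bf r})+\nabla V({\bf r})$ and use homogeneity to get $\ddot\phi\,M{\bf r}_0=\phi^{-a-1}\nabla W({\bf r}_0)+\phi^{-b-1}\nabla V({\bf r}_0)$. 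Pairing this identity (Euclidean inner product) with ${\bf r}_0$ and applying Euler's identities together with ${\bf r}_0^tM{\bf r}_0=I_0$ produces the scalar equation $I_0\ddot\phi=-aW({\bf r}_0)\phi^{-a-1}-bV({\bf r}_0)\phi^{-b-1}$; substituting this value of $\ddot\phi$ back into the vector identity and regrouping by powers of $\phi$ leaves
\[
\phi^{-a-1}\!\left(\nabla W({\bf r}_0)+\tfrac{aW({\bf r}_0)}{I_0}M{\bf r}_0\right)+\phi^{-b-1}\!\left(\nabla V({\bf r}_0)+\tfrac{bV({\bf r}_0)}{I_0}M{\bf r}_0\right)={\bf 0}.
\]
Since $a<b$ and $\phi$ is non-constant, $t\mapsto\phi(t)^{-a-1}$ and $t\mapsto\phi(t)^{-b-1}$ are linearly independent, so each bracket must vanish; rewriting with $\nabla I({\bf r}_0)=2M{\bf r}_0$ this is precisely $\nabla W({\bf r}_0)=\sigma_1\nabla I({\bf r}_0)$, $\nabla V({\bf r}_0)=\sigma_2\nabla I({\bf r}_0)$ with the $\sigma_1,\sigma_2$ recorded in Section~3. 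Since a positive scalar multiple of a simultaneous central configuration is again one, the particles form a simultaneous central configuration at every instant.

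\textbf{The converse.} For the ``if'' direction I would start from $\nabla W({\bf r}(t))=\sigma_1(t)\nabla I({\bf r}(t))$ and $\nabla V({\bf r}(t))=\sigma_2(t)\nabla I({\bf r}(t))$, add them, and use the equations of motion to obtain $\ddot{\bf r}(t)=g(t){\bf r}(t)$ with the scalar $g(t)=-(aW({\bf r}(t))+bV({\bf r}(t)))/I({\bf r}(t))$. Then every Cartesian coordinate of ${\bf r}$ solves the single scalar linear equation $\ddot y=g(t)y$, whose solution space is two-dimensional, so ${\bf r}(t)=\xi(t){\bf a}+\eta(t){\bf b}$ for a fundamental system $\xi,\eta$ and fixed vectors ${\bf a},{\bf b}\in\Omega$. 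To conclude I would argue that the shape cannot actually vary: the normalized configuration ${\bf s}(t)={\bf r}(t)/\sqrt{I({\bf r}(t))}$ is, at each instant, a critical point of both $\tilde W$ and $\tilde V$ on $S^*_{I_0}$, and (for nondegenerate such configurations) the set of these is discrete modulo the $S^1$-action, so a continuous curve inside it stays in a single $S^1$-orbit; this forces ${\bf a}$ and ${\bf b}$ to be proportional, whence ${\bf r}(t)=\phi(t){\bf r}_0$ is a homothety.

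\textbf{Main obstacle.} The ``only if'' half is routine once one notices the pairing with ${\bf r}_0$ that converts the vector identity into a scalar ODE for $\phi$, after which linear independence of the two powers of $\phi$ does the work. The substance of the proof is the ``if'' half: upgrading ``the acceleration is radial'', i.e. ${\bf r}(t)=\xi(t){\bf a}+\eta(t){\bf b}$, to ``the shape is frozen and the motion is non-rotating''. This requires structural input on the set of simultaneous central configurations --- its discreteness modulo rotations, hence a nondegeneracy hypothesis --- and, in order to exclude a genuinely rotating homographic motion sitting over such a configuration, an additional ingredient such as vanishing angular momentum for the orbits considered or the reading of ``homothetic'' in the $S^1$-reduced phase space; making this step precise is the delicate point.
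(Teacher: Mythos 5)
Your ``only if'' half is correct and is essentially the paper's argument transplanted from McGehee coordinates to Cartesian ones: the paper restricts to the invariant plane ${\mathcal P}=\{{\bf s}={\bf s}_0,{\bf u}={\bf 0}\}$, reads off from ${\bf u}'=0$ the identity $\rho^{b-1}[W({\bf s}_0)M{\bf s}_0+\nabla W({\bf s}_0)]+bV({\bf s}_0)M{\bf s}_0+\nabla V({\bf s}_0)=0$, and evaluates it at two distinct values $\rho_1\neq\rho_2$ to force both brackets to vanish --- exactly your ``linear independence of $\phi^{-a-1}$ and $\phi^{-b-1}$'' step. The only difference of substance is that the paper does not exclude the constant-scale case by definition but rules it out explicitly ($\rho'=0$ forces $v\equiv 0$, hence $\rho^{b-1}W+bV=0$, impossible since $W,V>0$); you should do the same or state your definition of ``homothetic'' up front, as you did.

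For the converse, you have correctly located the soft spot, but the gap you leave open cannot be closed under your literal reading of the hypothesis, and no amount of nondegeneracy or discreteness will help. A relative equilibrium sitting over a simultaneous central configuration (for instance over the equilateral configuration of Theorem~\ref{lagrange}, which is simultaneous) has the property that ${\bf r}(t)$ is at every instant a rotation of ${\bf s}_0$, hence at every instant a simultaneous central configuration, yet the motion is a rigid rotation and not a homothety. So ``the particles form, at all times, a simultaneous central configuration'' in the pointwise sense does not imply homothetic. The paper's own proof of this direction is a single sentence precisely because it reads the hypothesis as ``the normalized configuration ${\bf s}=\rho^{-1}{\bf r}$ is identically equal to a fixed simultaneous central configuration ${\bf s}_0$,'' under which ${\bf r}=\rho\,{\bf s}_0$ is a homothety by definition and there is nothing to prove. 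Your reduction to $\ddot{\bf r}=g(t){\bf r}$ and the two-dimensional solution space is a reasonable attempt at the stronger statement, but the final step (excluding rotation) genuinely requires an extra hypothesis such as vanishing angular momentum or the frozen-shape reading; the defect is in the statement's phrasing, not in your argument. You should either adopt the paper's reading explicitly or add the counterexample-excluding hypothesis.
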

\begin{proof}
Assume that the solution is homothetic, then ${\bf s}\equiv{\bf s}_0$, where ${\bf s}_0$ is a constant. Therefore ${\bf s}'\equiv 0$ and, from the second of equations (\ref{eqmocol}), ${\bf u}\equiv 0$.
Thus the homothetic orbits are confined to the invariant plane
\beq
{\mathcal P}=\{(\rho,{\bf s},v,{\bf u})|{\bf s}={\bf s}_0, {\bf u}=\bf {0}\}.
\eeq
So ${\bf u}'=0$ implies that $\rho^{b-1}[W({\bf s}_0)M{\bf s}_0+\nabla W({\bf s}_0)]+bV({\bf s}_0)M{\bf s}_0+\nabla V({\bf s}_0)=0$.
If $\rho$ is not constant then there are $\rho_1\neq 0$ and $\rho_2\neq 0$ with $\rho_1\neq \rho_2$ such that
\beq
\begin{split}
 &\rho_1^{b-1}[W({\bf s}_0)M{\bf s}_0+\nabla W({\bf s}_0)]=-[bV({\bf s}_0)M{\bf s}_0+\nabla V({\bf s}_0)]\\
 &\rho_2^{b-1}[W({\bf s}_0)M{\bf s}_0+\nabla W({\bf s}_0)]=-[bV({\bf s}_0)M{\bf s}_0+\nabla V({\bf s}_0)].
\end{split}
\eeq
This means that $[bV({\bf s}_0)M{\bf s}_0+\nabla V({\bf s}_0)]=0$ and $[W({\bf s}_0)M{\bf s}_0+\nabla W({\bf s}_0)]=0$, i.e. that ${\bf s}_0$ is a simultaneous central confiiguration for the potentials $V$ and $W$.
If $\rho$ is constant, $\rho'=0$ and either $\rho\equiv 0$ or $v\equiv 0$. The first case is trivial, whereas in  the latter case $-\rho^{b-1}W({\bf s})-bV({\bf s})=0$. But this is impossible since $V>0$ and $W>0$.

If ${\bf s}\equiv {\bf s}_0$ is, at all times, a simultaneous central configuration for $V$ and $W$, then the solution is obviously homothetic.
\end{proof}

The next result proves the existence and uniqueness of heteroclinic homothetic solutions for $b>1$.

\begin{theo}
Let ${\bf s}_0$ be a simultaneous  central configuration for the potentials $V$ and $W$. Then, if $b>1$, every energy surface of negative constant ($h<0$), contains a unique homothetic solution defined on $(-\infty,\infty)$, satisfying ${\bf s}={\bf s}_0$ for all times and such that $\rho(\tau)\rightarrow 0$ when $\tau\rightarrow\pm \infty$.
In other words the solution begins and ends in a total collapse, maintaining for all times the same central configuration.\label{hetero}
\end{theo}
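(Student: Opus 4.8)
The plan is to reduce to a one-degree-of-freedom problem on the invariant plane ${\mathcal P}=\{{\bf s}={\bf s}_0,\ {\bf u}={\bf 0}\}$ that already appeared in the proof of Theorem~\ref{simu}, and then to extract the required solution from the phase curves. First I would note that, since ${\bf s}_0$ is a simultaneous central configuration, the relations $W({\bf s}_0)M{\bf s}_0+\nabla W({\bf s}_0)={\bf 0}$ and $bV({\bf s}_0)M{\bf s}_0+\nabla V({\bf s}_0)={\bf 0}$ hold; hence, restricting (\ref{eqmot}) to ${\mathcal P}$, the ${\bf s}$- and ${\bf u}$-equations give ${\bf s}'={\bf u}'={\bf 0}$, so ${\mathcal P}$ is invariant and the flow on it is governed by
\[
\rho'=\rho v,\qquad v'=\tfrac b2 v^2-\rho^{b-1}W_0-bV_0,
\]
where $W_0:=W({\bf s}_0)>0$ and $V_0:=V({\bf s}_0)>0$. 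By Theorem~\ref{simu} the homothetic solutions with configuration ${\bf s}_0$ are precisely the nonconstant orbits of this planar system, and the energy relation (\ref{eqenerg}) restricted to ${\mathcal P}$ confines the orbits lying in $E_h$ to the curve $\Gamma_h=\{\,\rho\ge 0,\ \tfrac12 v^2=g(\rho)\,\}$, where $g(\rho):=\rho^{b-1}W_0+V_0+h\rho^b$.

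Next I would perform the elementary sign analysis of $g$ for $h<0$ and $b>1$. One has $g(0)=V_0>0$ and $g'(\rho)=\rho^{\,b-2}[(b-1)W_0+bh\rho]$, so $g$ strictly increases on $[0,\rho^*]$ and strictly decreases on $[\rho^*,\infty)$ with $\rho^*=(b-1)W_0/(b|h|)>0$, and $g(\rho)\to-\infty$ as $\rho\to\infty$. Hence $g$ has a unique positive zero $\rho_{\max}>\rho^*$, with $g>0$ on $[0,\rho_{\max})$. Consequently $\Gamma_h$ is a simple arc: its two boundary points at $\rho=0$ are the collision equilibria ${\bf s}_0^\pm$ (i.e.\ $v=\pm\sqrt{2V_0}$, ${\bf u}={\bf 0}$), it passes through $(\rho_{\max},0)$, and the reduced system has no other equilibrium on it, since $\rho v=0$ forces either $\rho=0$ or the impossible equality $\rho^{b-1}W_0+bV_0=0$.

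It then follows that $\Gamma_h\setminus\{{\bf s}_0^+,{\bf s}_0^-\}$ is a single orbit $\gamma$, and I would finish by identifying it. Tracing the flow along $\Gamma_h$: on the branch $v=+\sqrt{2g(\rho)}$ one has $\rho'>0$, so $\rho$ grows from $0$ to $\rho_{\max}$; at $(\rho_{\max},0)$ the vector field points in the negative $v$-direction, so $\gamma$ passes onto the branch $v=-\sqrt{2g(\rho)}$, along which $\rho$ decreases back to $0$. Writing $\tau=\pm\int d\rho/(\rho\sqrt{2g(\rho)})$ on each branch, one sees that the orbit approaches $\rho=0$ only as $\tau\to-\infty$ (along $v=+\sqrt{2g}$) and as $\tau\to+\infty$ (along $v=-\sqrt{2g}$), because the integrand behaves like $1/(\rho\sqrt{2V_0})$ near $\rho=0$; whereas $\rho_{\max}$, where the integrand is only $O((\rho_{\max}-\rho)^{-1/2})$, is reached in finite $\tau$-time. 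Thus $\gamma$ is defined on all of $(-\infty,\infty)$, keeps ${\bf s}\equiv{\bf s}_0$, and satisfies $\rho(\tau)\to0$ with $v(\tau)\to\mp\sqrt{2V_0}$ as $\tau\to\pm\infty$; that is, $\gamma$ is a heteroclinic orbit from ${\bf s}_0^+$ to ${\bf s}_0^-$ that begins and ends in total collapse. Uniqueness is immediate: any homothetic solution in $E_h$ with configuration ${\bf s}_0$ lies on $\Gamma_h$ by Theorem~\ref{simu}, and $\gamma$ is its only nonconstant orbit (the equilibria ${\bf s}_0^\pm$ have $\rho\equiv0$ and are not motions of the $n$-body problem).

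The only points needing care are the sign analysis of $g$ and the divergence of the time integral at $\rho=0$; the latter, which is what guarantees the solution is genuinely defined on the whole line rather than hitting collision in finite $\tau$-time, rests precisely on the fact that $g(0)=V_0>0$ because $V$ is a sum of positive terms. I do not anticipate a serious obstacle beyond these routine verifications.
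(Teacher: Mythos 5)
Your proposal is correct and follows essentially the same route as the paper: restrict to the invariant plane ${\mathcal P}$, use the energy relation to obtain the phase curve $\tfrac12 v^2=\rho^{b-1}W_0+V_0+h\rho^b$, and observe that for $h<0$ this is a single arc joining the two collision equilibria $v=\pm\sqrt{2V_0}$ at $\rho=0$. Your write-up is in fact more complete than the paper's, which omits the sign analysis of $g$ and the divergence of the time integral near $\rho=0$ that justifies the orbit being defined on all of $(-\infty,\infty)$.
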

\begin{proof}
 Since  ${\bf s}_0$ is a simultaneous central configuration for the potentials $V$ and $W$, we have that $[bV({\bf s}_0)M{\bf s}_0+\nabla V({\bf s}_0)]=0$ and $[W({\bf s}_0)M{\bf s}_0+\nabla W({\bf s}_0)]=0$.
Consequently, the set
\[
{\mathcal P}=\{(\rho,{\bf s},v,{\bf u})|{\bf s}={\bf s}_0, {\bf u}=\bf {0}\}
\]
is invariant for the equations of motion. Restricting these equations to ${\mathcal P}$, we get
\beq\begin{split}\label{add}
&\rho'=\rho v\\
&v'=\frac b 2 v^2-\rho^{b-1}W({\bf s}_0)-bV({\bf s}_0),
\end{split}\eeq
while the energy relation becomes
\[
\frac 1 2 v^2-\rho^{b-1}W({\bf s}_0)-V({\bf s}_0)=h\rho^b.
\]
Equations (\ref{add}) become
\[\begin{split}
&\rho'=\rho v\\
&v'=(b-1)\rho^{b-1}W({\bf s}_0)+b\rho^bh.
\end{split}\]
This leads to
\[
\frac{dv}{d\rho}=\frac{1}{v}[(b-1)\rho^{b-2}W({\bf s}_0)+b\rho^{b-1}h],
\]
which yields
\beq\frac{v^2}{2}=\rho^{b-1}W({\bf s}_0)+\rho^b h+K,
\eeq
where, if $b>1$, we choose $K=V({\bf s}_0)$. If $b>1$ and $h\geq 0$, then $|v|\geq \pm \sqrt{2V({\bf s}_0)}$ and the homothetic orbits are not heteroclinic. If $h<0$, there is a unique curve connecting the points $(\sqrt{2V({\bf s}_0)},0)$ and $(-\sqrt{2V({\bf s}_0)},0)$ on the plane $(v,\rho)$. These facts prove the theorem.
\end{proof}

The following result shows that the above property is also true for
the equilateral central configurations.

\begin{cor}
Let ${\bf s}_0$ be an equilateral central configuration for the potential $U$. Then, if $b>1$, every energy surface of negative constant ($h<0$) contains a unique heteroclinic homothetic solution.
\end{cor}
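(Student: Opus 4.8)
The plan is to reduce this corollary to Theorem~\ref{hetero} by showing that, in the Manev-type three-body problem, every equilateral central configuration of $U$ is in fact a \emph{simultaneous} central configuration for the potentials $W$ and $V$. Once that is established, Theorem~\ref{hetero} applies verbatim: for $b>1$, each energy surface with $h<0$ carries a unique homothetic solution defined on $(-\infty,\infty)$ with ${\bf s}\equiv{\bf s}_0$ and $\rho(\tau)\to 0$ as $\tau\to\pm\infty$, i.e.\ a heteroclinic homothetic solution between the two equilibria $({\bf s}_0,\pm\sqrt{2V({\bf s}_0)})$ on the total collision manifold.

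So the heart of the argument is the claim that an equilateral configuration satisfies $\nabla W=\sigma_1\nabla I$ and $\nabla V=\sigma_2\nabla I$ \emph{separately}. I would argue this exactly as in Section~5 (via Lemma~\ref{lemaa}): an equilateral triangle is non-collinear, so the rank condition of Lemma~\ref{lemaa} holds, and it suffices to solve the central configuration equations in the mutual distances $r_{12},r_{13},r_{23}$. For the homogeneous potential $W$ of degree $-a$ (here $a=1$), the equation $\nabla W=\sigma_1\nabla I$ written in mutual distances reads, for each pair $(i,j)$,
\[
-a\,\frac{m_im_j}{r_{ij}^{a+1}}=\frac{2\sigma_1}{\tilde m}\,m_im_j r_{ij},
\]
so that $r_{ij}^{a+2}=-a\tilde m/(2\sigma_1)$ is independent of the pair; hence all three mutual distances must be equal. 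Repeating the computation with $b$ in place of $a$ shows that $\nabla V=\sigma_2\nabla I$ also forces all $r_{ij}$ equal. Therefore the equilateral configuration simultaneously solves both systems, with $\sigma_1=-a\tilde m/(2r^{a+2})$ and $\sigma_2=-b\tilde m/(2r^{b+2})$, where $r$ is the common side length fixed by the prescribed moment of inertia $I_0$; this is precisely the definition of a simultaneous central configuration for $W$ and $V$.

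With the equilateral configuration identified as a simultaneous central configuration ${\bf s}_0$, Theorem~\ref{hetero} gives, for $b>1$ and any $h<0$, the unique homothetic solution described there, which begins and ends in total collapse while keeping the equilateral shape; this proves the corollary. I do not expect a serious obstacle: the only points requiring care are checking the rank hypothesis of Lemma~\ref{lemaa} (which holds because the equilateral triangle is non-degenerate) and making sure the normalization of ${\bf s}_0$ onto the sphere $S^*_{I_0}$ is compatible with the common side length $r$ — both routine. Uniqueness on each energy surface is inherited directly from the quadrature $v^2/2=\rho^{b-1}W({\bf s}_0)+\rho^b h+V({\bf s}_0)$ used in the proof of Theorem~\ref{hetero}.
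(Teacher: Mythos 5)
Your proposal is correct and follows essentially the same route as the paper: the paper's proof simply states that an equilateral configuration is ``clearly'' a simultaneous central configuration for $V$ and $W$ and then invokes Theorem~\ref{hetero}. You additionally supply the (correct) justification for that simultaneity claim via the mutual-distance equations, which the paper leaves implicit.
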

\begin{proof}
Clearly ${\bf s}_0$ is a simultaneous central configuration for the potentials $V$ and $W$. The proof follows from Theorem \ref{hetero}.
\end{proof}

Two submanifolds ${\mathcal E}_1$ and $\mathcal{E}_2$ of a submanifold ${\mathcal E}$ are said to be transverse at a point $x$ if one of the following situation arises:
\begin{enumerate}
\item ${\mathcal E}_1 \cap{\mathcal E}_2=\emptyset$;
\item $x\in{\mathcal E}_1 \cap{\mathcal E}_2$ and $T_x{\mathcal E}_1+T_x{\mathcal E}_2=T_x{\mathcal E}$, where $T_x{\mathcal E}$ denotes the tangent space to ${\mathcal E}$ at the point $x$.
\end{enumerate}
We can now prove the following result:

\begin{theo}
In the  planar Manev-type $n$-body problem with $b>2$, a necessary condition for having a transversal homothetic
solution $\gamma_h({\bf s}_0)$ in $E_h$ with $h<0$ is that $\tilde
V$ be a non-degenerate minimum at the point ${\bf s}_0$ associated
with the homothetic solution.
\end{theo}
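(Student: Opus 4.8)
The plan is to read the conclusion off a dimension count for a transverse heteroclinic intersection, feeding in the stable/unstable manifold dimensions supplied by the planar theorem above (the one giving $\dim W^u(\mathbf{s}_0^+)=2n-2-\ind(\mathbf{s}_0)$).

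First I would pin down what a transversal homothetic solution is. Since $\gamma_h(\mathbf{s}_0)$ is homothetic, Theorem~\ref{simu} gives that $\mathbf{s}_0$ is a simultaneous central configuration for $V$ and $W$, hence in particular a central configuration for $V$; and since $b>2>1$ and $h<0$, Theorem~\ref{hetero} identifies $\gamma_h(\mathbf{s}_0)$ as the unique orbit in $E_h$ with $\mathbf{s}\equiv\mathbf{s}_0$, $\mathbf{u}\equiv\mathbf{0}$, $\rho(\tau)\to 0$ as $\tau\to\pm\infty$. Equivalently, $\gamma_h(\mathbf{s}_0)$ is the heteroclinic orbit running from the rest point $\mathbf{s}_0^+=(0,\mathbf{s}_0,\sqrt{2V(\mathbf{s}_0)},\mathbf{0})$ to the rest point $\mathbf{s}_0^-=(0,\mathbf{s}_0,-\sqrt{2V(\mathbf{s}_0)},\mathbf{0})$ (the endpoints are the rest points on $C$ with $v=\pm\sqrt{2V(\mathbf{s}_0)}$; cf.\ Proposition~\ref{grad-v}), both lying on $C\subset\partial E_h$; thus $\gamma_h(\mathbf{s}_0)\subset W^u(\mathbf{s}_0^+)\cap W^s(\mathbf{s}_0^-)$. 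Transversality of $\gamma_h(\mathbf{s}_0)$ then means, by the definition recalled just before the theorem, that $T_x W^u(\mathbf{s}_0^+)+T_x W^s(\mathbf{s}_0^-)=T_x E_h$ for each $x$ on the orbit.

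Next I would run the dimension count. Since $\gamma_h(\mathbf{s}_0)$ is one‑dimensional and lies in both manifolds, its tangent line is contained in $T_x W^u(\mathbf{s}_0^+)\cap T_x W^s(\mathbf{s}_0^-)$, so transversality forces
\[
\dim W^u(\mathbf{s}_0^+)+\dim W^s(\mathbf{s}_0^-)-1\ \ge\ \dim\!\bigl(T_xW^u(\mathbf{s}_0^+)+T_xW^s(\mathbf{s}_0^-)\bigr)=\dim E_h .
\]
For this to be meaningful, $\mathbf{s}_0$ must first be a \emph{nondegenerate} critical point of $\tilde V$: otherwise a vanishing eigenvalue $\lambda_i$ of $A=D^2\tilde V(\mathbf{s}_0)$ produces, through $\mu_i^{1,2}=\tfrac14\{(b-2)v\pm\sqrt{(2-b)^2v^2+16\lambda_i}\}$ (the eigenvalue relation from the proof of the cited theorem), an extra zero eigenvalue of the linearization at $\mathbf{s}_0^\pm$ in $E_h$, so the endpoints are not hyperbolic and $\gamma_h(\mathbf{s}_0)$ cannot be a transverse intersection of genuine invariant manifolds of the stated dimensions. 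Granting nondegeneracy, the cited theorem yields $\dim W^u(\mathbf{s}_0^+)=\dim W^s(\mathbf{s}_0^-)=2n-2-\ind(\mathbf{s}_0)$ and $\dim E_h=4n-5$, whence $2\bigl(2n-2-\ind(\mathbf{s}_0)\bigr)-1\ge 4n-5$, i.e.\ $\ind(\mathbf{s}_0)\le 0$, so $\ind(\mathbf{s}_0)=0$. Finally, $\ind(\mathbf{s}_0)=0$ says $A=D^2\tilde V(\mathbf{s}_0)$ has no eigenvalue of negative real part, and with nondegeneracy this makes $A$ positive definite on the tangent space to the sphere; hence $\tilde V$ has a nondegenerate minimum at $\mathbf{s}_0$, which is the asserted condition.

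I expect the delicate point to be the degenerate case: making precise that a vanishing eigenvalue of $A$ genuinely destroys the transversality of $\gamma_h(\mathbf{s}_0)$ in $E_h$ — equivalently, that "transversal homothetic solution" must be read as presupposing hyperbolicity of the two endpoints, so that $W^u(\mathbf{s}_0^+)$ and $W^s(\mathbf{s}_0^-)$ are well‑defined manifolds of the dimensions furnished by the earlier theorem. The nondegenerate case is then merely the bookkeeping above, the balance $2(2n-2)-1=4n-5=\dim E_h$ being exactly what pins $\ind(\mathbf{s}_0)$ to zero.
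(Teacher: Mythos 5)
Your proposal is correct and follows essentially the same route as the paper: identify $\gamma_h({\bf s}_0)$ as lying in $W^u({\bf s}_0^+)\cap W^s({\bf s}_0^-)$, invoke the earlier theorem for $\dim W^u({\bf s}_0^+)=\dim W^s({\bf s}_0^-)=2n-2-\ind({\bf s}_0)$ and $\dim E_h=4n-5$, and conclude $\ind({\bf s}_0)=0$ from the transversality inequality $\dim E_h\le \dim W^u+\dim W^s-1$. Your extra remarks (tracing the heteroclinic structure through Theorems~\ref{simu} and~\ref{hetero}, and flagging that nondegeneracy of ${\bf s}_0$ must be presupposed for the invariant manifolds to have the stated dimensions) are sound refinements of what the paper leaves implicit, but do not change the argument.
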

\begin{proof}
Let $\gamma_h({\bf s}_0)$ be a transversal homothetic solution in
$E_h$ with $h<0$. Then by Theorem 3 both $W^u({\bf s^+}_0)$
and $W^s({\bf s^-}_0)$ are $(2n-2-\ind({\bf s}_0))$-dimensional
and $E_h$ is $(4n-5)$-dimensional. Since $\gamma_h({\bf s}_0)\in
W^u({\bf s}_0)\cap W^s({\bf s}_0)$ and $\gamma_h({\bf s}_0)$ is
transversal we have that
\[
\dim E_h\leq \dim W^u({\bf s}_0)+\dim W^s({\bf s}_0)-1.
\]
That is $4n-5 \leq 4n-5-2~\ind({\bf s}_0)$. Therefore $\ind({\bf s}_0)=0$
and the function $\tilde V$ has a nondegenerate minimum at ${\bf s}_0$.
\end{proof}

\section{Simultaneous Configurations and Relative Equilibria}
In this closing section, we will show that, for most choices of the masses in the quasihomogeneous 3-body problem, the collinear central configurations of the potential $U$ are not simultaneous relative equilibria for $V$ and $W$.

\begin{theo}\label{nowheredense}
 Let $\Sigma_3$ be the set of masses $(m_1,m_2,m_3)\in{\mathbb R}_+^3$ for which the collinear configurations are simultaneous central configurations for the potentials $V$ and $W$. Then the set $\Sigma_3$ is nonempty and nowhere dense in ${\mathbb R}_+^3$.
\end{theo}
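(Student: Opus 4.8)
The plan is to work with the distance-variable formulation from Section 5 and turn the simultaneity condition into a polynomial equation in the masses, then argue that its zero set is a proper algebraic subvariety of $\mathbb{R}_+^3$. First I would recall that by Lemma~\ref{lemaa} and the argument following Theorem~\ref{lagrange}, for a collinear configuration the equations $\nabla W=\sigma_1\nabla I$ and $\nabla V=\sigma_2\nabla I$ can each be rewritten in terms of the three mutual distances $r_{12},r_{13},r_{23}$, but now with the \emph{collinearity constraint} $r_{12}+r_{23}=r_{13}$ (for the natural ordering of the three bodies on the line) replacing the genuine-triangle condition. The central-configuration equation for $U=W+V$ gives, for each pair $ij$, a relation of the form $-m_im_j r_{ij}^{-2}-b\,m_im_j r_{ij}^{-b-1}=2\sigma\tilde m^{-1} m_i m_j r_{ij}$, i.e. all three $r_{ij}$ satisfy the same polynomial $f(r)=2\sigma r^{b+2}+\tilde m r^{b-1}+\tilde m b=0$; since this polynomial has a unique positive root, \emph{every} collinear central configuration of $U$ is in fact equilateral in the sense $r_{12}=r_{13}=r_{23}$ — but that is incompatible with collinearity $r_{12}+r_{23}=r_{13}$. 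So the collinear central configurations of $U$ are genuinely collinear (not equilateral), and for them $\sigma$ is determined but the three distances are \emph{not} all equal; hence the simultaneity of $W$ and $V$ is a genuine extra constraint.

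Second I would set up the two separate systems. For a collinear central configuration ${\bf r}$ of $U$, being simultaneously central for $V$ and $W$ means that the vectors $\nabla W({\bf r})$, $\nabla V({\bf r})$, $\nabla I({\bf r})$ are pairwise parallel. Writing the Euclidean gradients componentwise (using the one-dimensional version of Lemma~\ref{derivatives}, part (1), valid on $\mathbb{R}$), $\nabla W$ has $i$-th entry $\sum_{j\ne i} m_i m_j\,\mathrm{sgn}(x_i-x_j)\,|x_i-x_j|^{-a-1}$ times the appropriate constant, and similarly for $\nabla V$ with exponent $b$ and for $\nabla I$ with exponent $-1$; here $x_1<x_2<x_3$ are the positions on $l$. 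Imposing $\nabla W=\sigma_1\nabla I$ gives, for each of the three equal forces across a pair, that $r_{ij}$ solves $g_W(r):=2\sigma_1 r^{a+2}+\tilde m\,a = -(\text{cross terms})$ — the point being that the collinear case couples the three equations through the constraint, so $r_{12},r_{13},r_{23}$ are no longer forced equal. Concretely, eliminating $\sigma_1$ from the three pairwise equations of $\nabla W=\sigma_1\nabla I$ yields one relation $R_W(r_{12},r_{13},r_{23};\tilde m)=0$, and similarly $R_V=0$; together with the collinearity relation $r_{12}+r_{23}=r_{13}$ and the $U$-central-configuration relations (which fix the $r_{ij}$ as functions of $\tilde m$ once the positions are chosen), the masses must satisfy an over-determined system. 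The upshot is a single nontrivial analytic (indeed algebraic, after clearing denominators and the $r^{b}$ powers — note $b$ is fixed) equation $\Phi(m_1,m_2,m_3)=0$ whose solution set contains $\Sigma_3$.

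Third, nonemptiness: the equilateral configuration is a simultaneous central configuration for all masses (Corollary following Theorem~\ref{hetero}), but that is not collinear; for the collinear statement I would instead exhibit the symmetric case $m_1=m_3$ with $m_2$ arbitrary. There the configuration with $x_2=0$, $x_1=-x_3$ is forced by symmetry to have $r_{12}=r_{23}$, the central-configuration equation for $U$ then determines $r_{12}=r_{23}$ and $r_{13}=2r_{12}$, and one checks that both $\nabla W$ and $\nabla V$ are automatically parallel to $\nabla I$ by the same symmetry — so the whole one-parameter family $\{m_1=m_3\}$, or at least a nonempty subset of it, lies in $\Sigma_3$, giving nonemptiness. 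Finally, nowhere density: the relation $\Phi=0$ is not identically zero on $\mathbb{R}_+^3$ because a generic asymmetric mass triple $(m_1,m_2,m_3)$ fails it — here I would produce one explicit asymmetric triple and verify $\Phi\ne 0$ numerically/symbolically, or argue that for $m_1\ne m_3$ the equilateral-forcing mechanism breaks and a direct expansion of $R_W-R_V$ to lowest order in a perturbation off a symmetric triple gives a nonzero linear term. Since $\Phi$ is real-analytic and not identically zero, its zero set has empty interior and is closed, hence nowhere dense, and $\Sigma_3\subseteq\{\Phi=0\}$ inherits this.

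The main obstacle I expect is the elimination step producing $\Phi$: keeping track of the signs $\mathrm{sgn}(x_i-x_j)$, the collinearity constraint $r_{12}+r_{23}=r_{13}$, and the coupling of the three pairwise equations, while simultaneously carrying both exponents $a=1$ and $b$, makes the algebra heavy, and one must make sure the resulting $\Phi$ is genuinely nontrivial rather than an identity forced by the constraints. The cleanest route is probably to fix $\tilde m=1$ by scaling, parametrize the collinear central configurations of $U$ by a single distance (using the unique-positive-root fact), and then show the two ratios $|\nabla W|/|\nabla I|$ evaluated at the three bodies cannot all coincide unless the masses satisfy one analytic equation — after which nonemptiness via the symmetric family and a single generic-triple check finish the argument.
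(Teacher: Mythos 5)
Your overall architecture is the same as the paper's: realize $\Sigma_3$ inside the zero set of a real-analytic function of the masses, argue that this function is not identically zero so that its zero set is closed with empty interior, and obtain nonemptiness from the symmetric configuration $m_1=m_3$ with $m_2$ at the midpoint (your nonemptiness argument is essentially verbatim the paper's). Where you diverge is in how the analytic function is produced. The paper does no elimination in the mutual distances at all: it invokes Euler's formula, by which the collinear central configuration of a homogeneous potential of degree $-a$ (for a fixed ordering of the bodies) is determined by the ratio of the two gaps, the unique positive root of an explicit polynomial whose coefficients are analytic in $(m_1,m_2,m_3)$ and depend on the exponent. Hence ${\bf s}_W$ and ${\bf s}_V$ are analytic in the masses, $z={\bf s}_V-{\bf s}_W$ is analytic, and the exponent-dependence of Euler's polynomial makes $z\not\equiv 0$ essentially for free. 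Your route, by contrast, leaves the crux unproved: the nontriviality of your $\Phi$ is the \emph{entire} content of the nowhere-density claim, and you defer it to an unexecuted ``numerical/symbolic check at one asymmetric triple'' or a ``lowest-order expansion'' that you do not carry out. As written, this is a genuine gap, not a routine verification: without exhibiting why the $a$-equation and the $b$-equation select \emph{different} collinear shapes for generic masses, nothing rules out $\Phi\equiv 0$. The paper's choice of Euler's formula is precisely what closes this hole.

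A second, more local problem is your opening paragraph. Lemma~\ref{lemaa} requires $\mathrm{rank}(A)=3$, which the paper shows is equivalent to the three bodies \emph{not} being collinear (the relevant determinant is twice the oriented area of the triangle). So the decoupled per-pair equations $f(r_{ij})=0$ used to prove Theorem~\ref{lagrange} simply do not apply to collinear configurations, and your intermediate deduction that ``every collinear central configuration of $U$ is in fact equilateral, which contradicts collinearity'' is not a valid inference from those equations; it is also unnecessary for the theorem. If you pursue your elimination strategy you must work from the start with the constrained variables (or directly with positions $x_1<x_2<x_3$ on the line, as in the discussion around Proposition~\ref{restriction} and Lemma~\ref{derivatives}), and only then extract the single relation among the masses whose nonvanishing you still need to establish. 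Finally, note that the theorem is stated for the general quasihomogeneous problem with exponents $a<b$, whereas your computation silently fixes $a=1$; the paper's Euler-formula argument is carried out for arbitrary homogeneity degrees.
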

\begin{proof}
Assume the configuration ${\bf s}_V(m_1,m_2,m_3)$ is a collinear central configuration for $V$ and ${\bf s}_W(m_1,m_2,m_3)$  is a collinear central configuration for $W$. In \cite{euler1767}, Euler found
a complicated formula that expresses the ratio of the distances
between the masses for any rectilinear central configuration
in the Newtonian case. Euler's formula can be directly extended 
to any homogeneous potential. Moreover, the fact that Euler's
expression is an analytic function of the masses remains true
in the homogeneous case. Therefore both ${\bf s}_V$ and
${\bf s}_W$ are analytic functions of $m_1, m_2$ and $m_3$,
as long as the masses are positive.
Consequently the function  $z={\bf s}_V-{\bf s}_W$ is also
an analytic function of the masses.

For the function $V$, Euler's formula depends on $a$, whereas for $W$ it depends on $b$. So in general ${\bf s}_V\neq {\bf s}_W$, therefore for every $a$ and $b$ with $a\neq b$ there are values of the masses for which $z\neq 0$. Since $z$ is a nonzero analytic function, its zeroes form a nowhere dense set. 

The nonemptiness of the set of simultaneous central configurations follows from noticing that if $m_1=m_3$ and the mass $m_2$ is located halfway between the other two, then the three masses form a simultaneous central configuration for $V$ and $W$. 

\end{proof}

A consequence of Theorems \ref{simu} and \ref{nowheredense} is that, for most values of the masses, there are no rectilinear homothetic orbits. More precisely:

\begin{cor}
If $(m_1,m_2,m_3)\in {\mathbb R}_+^3\setminus \Sigma_3$, then there are no rectilinear homothetic orbits.
\end{cor}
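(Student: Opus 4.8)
The plan is to derive the corollary as an immediate consequence of the two results it cites, with essentially no additional work. Suppose $(m_1,m_2,m_3)\in{\mathbb R}_+^3\setminus\Sigma_3$. By the very definition of $\Sigma_3$ in Theorem \ref{nowheredense}, this means the collinear central configurations of the combined potential $U=V+W$ for these masses are \emph{not} simultaneous central configurations for $V$ and $W$; equivalently, there is no collinear configuration ${\bf s}_0$ that is simultaneously a central configuration for $V$ and for $W$.

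Next I would invoke Theorem \ref{simu}, which characterizes homothetic solutions: a solution of the Manev-type $n$-body problem is homothetic if and only if the particles form, at all times, a simultaneous central configuration for $V$ and $W$. In particular, a \emph{rectilinear} homothetic orbit would require a collinear configuration ${\bf s}_0$ that is a simultaneous central configuration for both $V$ and $W$. But we have just observed that for masses outside $\Sigma_3$ no such collinear simultaneous central configuration exists. Hence there can be no rectilinear homothetic orbit, which is exactly the assertion of the corollary.

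The one point worth stating carefully is the logical match between the definition of $\Sigma_3$ and the hypothesis of Theorem \ref{simu}: Theorem \ref{nowheredense} phrases $\Sigma_3$ in terms of "the collinear configurations are simultaneous central configurations," while Theorem \ref{simu} requires a simultaneous central configuration that happens to be collinear. These are the same condition because, by Theorem \ref{lagrange}'s collinear counterpart (Moulton's theorem for quasihomogeneous potentials, proved in Section 4), for fixed masses the collinear central configurations of $V$, of $W$, and of $U$ are each uniquely determined up to equivalence within each of the $n!/2$ orderings; a collinear configuration that is central for $U$ is simultaneously central for $V$ and $W$ precisely when it coincides with the (unique) collinear central configurations of $V$ and of $W$, which is the membership criterion for $\Sigma_3$. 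I expect no real obstacle here — the corollary is a direct logical composition of Theorems \ref{simu} and \ref{nowheredense} — so the only care needed is to make this identification explicit so the reader is not left to reconstruct it.
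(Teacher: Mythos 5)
Your proposal is correct and follows exactly the route the paper intends: the paper states the corollary as an immediate consequence of Theorems \ref{simu} and \ref{nowheredense} without writing out a proof, and your argument (a rectilinear homothetic orbit would force a collinear simultaneous central configuration for $V$ and $W$, which is excluded for masses outside $\Sigma_3$) is precisely that composition. The extra care you take in matching the definition of $\Sigma_3$ with the hypothesis of Theorem \ref{simu} via the Moulton-type uniqueness is a reasonable and harmless elaboration of what the paper leaves implicit.
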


This shows that the rectilinear homothetic orbits are characteristic
to homogeneous potentials, but they prove unlikely in the
quasihomogeneous case.


\end{document}